%
%
%
%
%
%
\RequirePackage{fix-cm}
\documentclass[twocolumn]{svjour3}          
\smartqed  
\usepackage{graphicx}
%
%

\usepackage{amssymb,amsmath}
\usepackage{enumerate}
\usepackage[para,online,flushleft]{threeparttable}
\usepackage{multirow}
\usepackage{subfig}
\usepackage[linesnumberedhidden,ruled,vlined]{algorithm2e}
\usepackage{algpseudocode}
\usepackage{url}
\usepackage{hyperref}
\usepackage{color}

\let\oldnl\nl
\newcommand{\nonl}{\renewcommand{\nl}{\let\nl\oldnl}}

\journalname{EURASIP Journal on Advances in Signal Processing}

\begin{document}

\title{Computable performance guarantees for compressed sensing matrices
}


\author{Myung~Cho \and
        Kumar Vijay~Mishra         \and
        Weiyu~Xu 
}


\institute{ \small{Myung~Cho, Kumar Vijay~Mishra, and Weiyu~Xu \at
              Dept. of ECE, University of Iowa, Iowa City, IA, 52242 \\
              \email{(myung-cho, kumarvijay-mishra, weiyu-xu)@uiowa.edu.}}
}

\date{Received: date / Accepted: date}

\maketitle

\begin{abstract}
The null space condition for $\ell_1$ minimization in compressed sensing is a necessary and sufficient condition on the sensing matrices under which a sparse signal can be uniquely recovered from the observation data via $\ell_1$ minimization. However, verifying the null space condition is known to be computationally challenging. Most of the existing methods can provide only upper and lower bounds on the proportion parameter that characterizes the null space condition. In this paper, we propose new polynomial-time algorithms to establish upper bounds of the proportion parameter. We leverage on these techniques to find upper bounds and further develop a new procedure - tree search algorithm - that is able to precisely and quickly verify the null space condition. Numerical experiments show that the execution speed and accuracy of the results obtained from our methods far exceed those of the previous methods which rely on Linear Programming (LP) relaxation and Semidefinite Programming (SDP).
\keywords{compressed sensing \and  null space condition \and $\ell_1$ minimization  \and performance guarantee \and sensing matrix}
\end{abstract}

\section{Introduction}
\label{intro}
Compressed sensing is an efficient signal processing technique to recover a sparse signal from fewer samples than required by the Nyquist-Shannon theorem, reducing time and energy spent in sampling operation. These advantages make compressed sensing attractive in various signal processing areas \cite{eldar2012compressed}.

In compressed sensing, we are interested in recovering the sparsest vector $x \in \mathbb{R}^n$ that satisfies the underdetermined equation $y=Ax$. Here $\mathbb{R}$ is the set of real numbers, $A \in \mathbb{R}^{m \times n}, \; m < n$, is a sensing matrix, and $y \in \mathbb{R}^m$ is the observation or measurement data. This is posed as an $\ell_0$ minimization problem:
\setlength{\mathindent}{15pt}
\par\noindent
\small
\begin{align}
\label{L0_min}
          & \text{minimize} \;\; \| x \|_{0} \nonumber\\
          & \text{subject to} \;\; y = Ax,
\end{align}
\normalsize
where $\|x\|_0$ is the number of non-zero elements in vector $x$. The $\ell_0$ minimization is an NP-hard problem. Therefore, we often relax (\ref{L0_min}) to its closest convex approximation - the $\ell_1$ minimization problem:
\par\noindent
\small
\begin{align}
\label{L1_min}
         & \text{minimize} \;\; \| x \|_{1}  \nonumber\\
         & \text{subject to} \;\; y = Ax.
\end{align}
\normalsize
It has been shown that the optimal solution of $\ell_0$ minimization can be obtained by solving $\ell_1$ minimization under certain conditions (e.g. Restricted Isometry Property or RIP) \cite{candes2005decoding,d2011testing,candes2006robust,candes2006stable,donoho2005neighborly}. For random sensing matrices, these conditions hold with high probability. We note that RIP is a sufficient condition for sparse recovery \cite{Tillmann2014Computational}.

A necessary and sufficient condition under which a $k$-sparse signal $x$, ($k \ll n$) can be uniquely obtained via $\ell_1$ minimization is Null Space Condition (NSC) \cite{juditsky2011verifiable,cohen2009compressed,d2011testing}. A matrix $A$ satisfies NSC for a positive integer $k$ if
\par\noindent
\small
\begin{align}
    ||z_K ||_1 < || z_{\overline K} ||_1
\end{align}
\normalsize
holds true for all $z \in \{z:\; Az=0, z\neq 0\}$ and for all subsets $K \subseteq \{1,2,...,n\}$ with $|K|\leq k$. Here $K$ is an index set, $|K|$ is the cardinality of $K$, $z_K$ is the part of the vector $z$ over the index set $K$, and $\overline{K}$ is the complement of $K$. NSC is related to the proportion parameter $\alpha_k$ defined as
\par\noindent
\small
\begin{align}
\label{eq:alphak}
    \alpha_k \triangleq \underset{\{z:\; Az=0,\;z\neq 0\}}{\text{maximize}} \underset{{\{K:\; |K|\leq k\}}}{\text{maximize}} \;\;\frac{\|z_K \|_{1}}{\|z\|_{1}}.
\end{align}
\normalsize
The $\alpha_k$ is the optimal value of the following optimization problem:
\par\noindent
\small
\begin{align}
\label{opt_prob}
          &\underset{z,\{K:\;|K|\leq k\}}{\text{maximize}}\;\; \| z_K \|_{1}             \nonumber\\
          &  \;\;\;        \text{subject to} \;\; \| z \|_{1} \leq 1,\; Az = 0,
\end{align}
\normalsize
where $K$ is a subset of $\{1,2, \dots, n\}$ with cardinality at most $k$. The matrix $A$ satisfies NSC for a positive integer $k$ if and only if $\alpha_k<\frac{1}{2}$. Equivalently, NSC can be verified by computing or estimating $\alpha_k$.
The role of $\alpha_k$ is also important in recovery of an approximately sparse signal $x$ via $\ell_1$ minimization where a smaller $\alpha_k$ implies more robustness \cite{juditsky2011verifiable,cohen2009compressed,xu2011precise}.

We are interested in computing $\alpha_k$ and, especially, finding the maximum $k$ for which $\alpha_k < \frac{1}{2}$. However, computing $\alpha_k$ to verify NSC is extremely expensive and was reported in \cite{Tillmann2014Computational} to be NP-hard. Due to the challenges in computing $\alpha_k$, verifying NSC explicitly for deterministic sensing matrices remains a relatively unexamined research area. In \cite{lee2008computing,tang2011verifiable,d2011testing,juditsky2011verifiable}, convex relaxations were used to establish upper or lower bounds of $\alpha_k$ (or other parameters related to $\alpha_k$) instead of computing the exact $\alpha_k$. While \cite{d2011testing,lee2008computing} proposed semidefinite programming based methods, \cite{juditsky2011verifiable,tang2011verifiable} suggested linear programming relaxations to obtain the upper and lower bounds of $\alpha_k$. For both methods, computable performance guarantees on sparse signal recovery were reported via bounding $\alpha_k$. However, these bounds of $\alpha_k$ could only verify NSC with $k=O(\sqrt{n})$, even though theoretically $k$ can grow linearly with $n$.

Our work drastically departs from these prior methods \cite{lee2008computing,tang2011verifiable,d2011testing,juditsky2011verifiable} that provide only the upper and lower bounds. In our solution, we propose the \textit{pick-$l$-element} algorithms ($1 \leq l < k$), which compute upper bounds of $\alpha_k$ in polynomial time. Subsequently, we leverage on these algorithms to develop the \textit{Tree Search Algorithm (TSA)} - a new method to compute an exact $\alpha_k$ by significantly reducing computational complexity of an exhaustive search method. This algorithm offers a way to control a smooth tradeoff between complexity and accuracy of the computations. In the conference precursor to this paper, we had introduced \textit{Sandwiching Algorithm (SWA)} \cite{cho2013new}, which employs a branch-and-bound method. Although SWA can also be used to calculate the exact $\alpha_k$, it has a disadvantage of greater memory usage than TSA. On the other hand, TSA provides memory and performance benefits for high-dimensional matrices (e.g., up to size $\sim 6000 \times 6000$).

It is noteworthy that our methods are different from RIP or the neighborly polytope framework for analyzing the sparse recovery capability of random sensing matrices. For example, prior works such as  \cite{donoho2005neighborly,donoho2010precise} employ the neighborly polytope to predict theoretical lower bounds on recoverable sparsity $k$ for a randomly chosen Gaussian matrix. However, our methods do not resort to a probabilistic analysis and are applicable for any given deterministic sensing matrix. Also, our algorithms have the strength of providing better bounds than existing methods \cite{lee2008computing,tang2011verifiable,d2011testing,juditsky2011verifiable} for a wide range of matrix sizes.
\subsection{Main contributions}
We summarize our main contributions as follows:
\begin{enumerate}[(i)]
\item \textbf{Faster algorithms for high dimensions.} We designed the  pick-$l$ algorithm (and its optimized version), where $l$ is a chosen integer, to provide upper bounds on $\alpha_k$. We are able to show that when $l$ increases, the optimized pick-$l$ algorithm provides tighter upper bound on $\alpha_{k}$. Numerical experiments show that, even with $l=2$ or $3$, the pick-$l$ algorithm already provides better bound on $\alpha_k$ than the previous algorithms based on the LP \cite{juditsky2011verifiable} and SDP \cite{d2011testing}. For large sensing matrices,  the pick-$1$-element algorithm can be significantly faster than the LP and SDP methods.

\item \textbf{Novel formulations using branch-and-bound.} Based on the pick-$l$ algorithm, we propose a branch-and-bound tree search approach to compute tighter bounds or even the exact value of $\alpha_{k}$. To the best of our knowledge, this tree search algorithm is the first branch-and-bound algorithm to verify NSC for $\ell_1$ minimization.  This branch-and-bound approach heavily depends on the pick-$l$ algorithm developed in this paper.  For example, the LP \cite{juditsky2011verifiable} and  SDP \cite{d2011testing} methods  cannot be directly adapted to provide an efficient branch and bound approach,  due to their lack of subset-specific upper bounds on $\alpha_{k}$. In numerical experiments, we demonstrated that the tree search algorithm reduced the execution time to precisely calculate $\alpha_k$ by around 40-8000 times, compared to the exhaustive search method. 

\item \textbf{Simultaneous upper and lower bounds.} The branch-and-bound tree search algorithm simultaneously maintains upper and lower bounds of $\alpha_ k$ during the run-time. This approach has two benefits. Firstly, if one is interested in merely certifying the NSC for a positive $k$ rather than obtaining the exact $\alpha_k$, then one can terminate the TSA early to shorten the running time. This can be done as soon as the global upper (lower) bound drops below (exceeds) 1/2 and, therefore, concluding that the NSC for the positive $k$ is satisfied (not satisfied). Secondly, consider the case when TSA is terminated early due to, say, constraints on running time. Then, the process still yields meaningful bounds on $\alpha_k$ via the record of continuously maintained upper and lower bounds.

\item \textbf{New results on recoverable sparsity.} For a certain $l<k$,  we can compute $\alpha_{{l}} $ or its upper bound by using the branch-and-bound tree search algorithm (for example, based on the pick-$1$-element algorithm). We introduce a novel result (Lemma \ref{kmax_bound}), which can use $\alpha_{l}$ to lower bound the recoverable sparsity $k$. This approach of lower bounding the recoverable sparsity $k$ is useful when $l$ is too large to perform the pick-$l$ algorithm directly (which requires $\binom{n}{l}$ enumerations).
\end{enumerate}

\subsection{Notations and preliminaries}
\label{notation}
We denote the sets of real numbers, and positive integers as $\mathbb{R}$ and $\mathbb{Z}^{+}$ respectively. We reserve uppercase letters $K$ and $L$ for index sets, and lowercase letters $k, l \in \mathbb{Z}^{+}$ for their respective cardinalities. We also use $|\cdot|$ to denote the cardinality of a set. We assume $k > l \geq 1$ throughout the paper. For vectors or scalars, we use lowercase letters, e.g., $x, k, l$. For a vector $x \in \mathbb{R}^{n}$, we use $x_i$ for its $i$-th element. If we use an index set as a subscript of a vector, it represents the partial vector over the index set. For example, when $x \in \mathbb{R}^n$ and $K=\{1,2\}$, $x_K$ represents $[x_1, x_2]^T$. We reserve uppercase $A$ for a sensing matrix whose dimension is $m \times n$. Since the number of columns of a sensing matrix $A$ is $n$, the full index set we consider is $\{1,2,...,n\}$. In addition, we represent $\binom{n}{l}$ numbers of subsets as $L_i$, $i=1,...,\binom{n}{l}$, where $L_i \subset \{1,2,...,n\}$, $|L_i| = l$. We use the superscript * to represent an optimal solution of an optimization problem. For instance, $z^{*}$ and $K^{*}$ are the optimal solution of (\ref{opt_prob}). Since we need to represent an optimal solution for each index set $L_i$, we use the superscript $i*$ to represent an optimal solution for an index set $L_i$, e.g., $z^{i*}$. The maximum value of $k$ such that both $\alpha_{k} < \frac{1}{2}$ and $\alpha_{k+1} \geq \frac{1}{2}$ hold true is denoted by the \textit{maximum recoverable sparsity} $k_{max}$.

\section{Pick-$l$-element Algorithm}
\label{Sec1}
Consider a sensing matrix with $n$ columns. Then, there are $\binom{n}{k}$ subsets $K$ each of cardinality $k$. When $n$ and $k$ are large, exhaustive search over these subsets to compute $\alpha_k$ is extremely expensive. For example, when $n=100$ and $k=10$, it takes a search over 1.7310e+13 subsets to compute $\alpha_k$ - a combinatorial task that is beyond the technological reach of common desktop computers. Our goal is to devise algorithms that can rapidly yield an exact value of $\alpha_k$. As an initial step, we develop a method to compute an upper bound of $\alpha_k$ in polynomial time, which is called the \textit{pick-$l$-element algorithm} (or simply, pick-$l$ algorithm), where $l$ is a chosen integer such that $1 \leq l < k$.

Let us define the proportion parameter for a given index set $L$ such that $|L|=l$, denoted by $\alpha_{l,L}$, as
\par\noindent
\small
\begin{align}
\label{eq:alphalL}
    \alpha_{l,L} \triangleq \underset{\{z:\; Az=0,\;z\neq 0\}}{\text{maximize}} \frac{\|z_L \|_{1}}{\|z\|_{1}}.
\end{align}
\normalsize
(\ref{eq:alphalL}) is the partial optimization problem of (\ref{eq:alphak}) only considering the vector $z$ in the null space of $A$ for a fixed index set $L$. We can obtain $\alpha_{l,L}$ by solving the following optimization problem:
\par\noindent
\small
\begin{align}
\label{opt_prob2}
          & \underset{z}{\text{maximize}} \;\; \| z_L \|_{1} \nonumber\\
          & \text{subject to} \;\; \| z \|_{1} \leq 1, \; Az = 0.
\end{align}
\normalsize
Since (\ref{opt_prob2}) is maximizing a convex function for a given subset $L$, we cast (\ref{opt_prob2}) as $2^{l}$ linear programming problems by considering all the possible sign patterns of every element of $z_L$ (e.g., if $l=2$ and $L=\{1,2\}$, then, $||z_L||_1 = |z_1|+|z_2|$ can correspond to $2^l = 4$ possibilities: $z_1+z_2$, $z_1-z_2$, $-z_1+z_2$, and $-z_1-z_2$). $\alpha_{l,L}$ is equal to the maximum among the $2^{l}$ objective values.

The pick-$l$ algorithm uses $\alpha_{l,L}$'s obtained from different index sets to compute an upper bound of $\alpha_{k}$.  Algorithm \ref{pick_l_algo_pseudo} shows the steps of the pick-$l$ algorithm in detail. The following Lemmata show that the pick-$l$ algorithm provides an upper bound of $\alpha_{k}$. Firstly, we provide Lemma \ref{lemma:cheapupper} to derive the upper bound of the proportion parameter for a fixed index set $K$, and then, we show that the pick-$l$ algorithm yields an upper bound of $\alpha_k$ in Lemma \ref{lemma_pickl_upper_bound}.

\begin{lemma}[Cheap Upper Bound (CUB) for a given subset $K$]
\label{lemma:cheapupper}
 Given a subset $K$, we have
\par\noindent
\small
\begin{align}
\label{eq:cheapbound}
        CUB(\alpha_{k,K}) \triangleq \frac{1}{{\binom{k-1}{l-1}}} \sum_{\{L_i \subseteq K,\; |L_i|=l\}} \alpha_{l,L_i}
        \geq \alpha_{k,K}.
\end{align}
\normalsize
\end{lemma}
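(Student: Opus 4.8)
The plan is to exploit a double-counting identity that relates the single $\ell_1$ norm $\|z_K\|_1$ to the sum of the partial norms $\|z_{L_i}\|_1$ over all $l$-element subsets $L_i$ of $K$, and then to invoke the definition of each $\alpha_{l,L_i}$ as a maximum in order to control the individual terms.

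First I would fix an arbitrary $z$ in the null space of $A$ and write $\|z_{L_i}\|_1 = \sum_{j \in L_i} |z_j|$. Summing over all subsets $L_i \subseteq K$ with $|L_i| = l$ and exchanging the order of summation, each coordinate $j \in K$ is counted once for every $l$-subset of $K$ that contains it. Since $|K| = k$, the number of such subsets is exactly $\binom{k-1}{l-1}$, obtained by choosing the remaining $l-1$ indices from the other $k-1$ elements of $K$. This yields the identity $\sum_{\{L_i \subseteq K,\,|L_i|=l\}} \|z_{L_i}\|_1 = \binom{k-1}{l-1}\,\|z_K\|_1$, which is the combinatorial heart of the lemma.

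Next I would let $z^{*}$ denote an optimizer of the program defining $\alpha_{k,K}$, so that $\alpha_{k,K} = \|z^{*}_K\|_1 / \|z^{*}\|_1$. Applying the identity above to $z^{*}$ and dividing through by $\|z^{*}\|_1$ gives $\alpha_{k,K} = \frac{1}{\binom{k-1}{l-1}} \sum_{\{L_i \subseteq K\}} \frac{\|z^{*}_{L_i}\|_1}{\|z^{*}\|_1}$. The key observation is that this single optimizer $z^{*}$ is itself feasible for every one of the optimization problems that define the $\alpha_{l,L_i}$, since it lies in the null space of $A$ and is nonzero; hence by the maximality in \eqref{eq:alphalL} we have $\|z^{*}_{L_i}\|_1 / \|z^{*}\|_1 \le \alpha_{l,L_i}$ for each $i$. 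Substituting these termwise bounds produces $\alpha_{k,K} \le \frac{1}{\binom{k-1}{l-1}} \sum_{\{L_i \subseteq K\}} \alpha_{l,L_i} = CUB(\alpha_{k,K})$, as required.

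I do not expect a serious technical difficulty, since the argument is short once the counting identity is in place. The two points that require care are the double-counting step itself — correctly identifying the multiplicity $\binom{k-1}{l-1}$ with which each coordinate of $K$ appears across the $l$-subsets — and the clean separation between the global optimizer $z^{*}$, which must be used uniformly in every term, and the per-subset optimizers of the $\alpha_{l,L_i}$; it is precisely the feasibility of $z^{*}$ for each subproblem that licenses the termwise inequality and hence the overall bound.
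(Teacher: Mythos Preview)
Your proof is correct and follows essentially the same approach as the paper: establish the double-counting identity $\sum_{L_i\subseteq K}\|z_{L_i}\|_1=\binom{k-1}{l-1}\|z_K\|_1$, apply it to the optimizer $z^{*}$ of $\alpha_{k,K}$, and then bound each term using the feasibility of $z^{*}$ for the subproblems defining $\alpha_{l,L_i}$. Your write-up is slightly more explicit about the counting step and the role of the single global optimizer versus the per-subset optimizers, but the argument is the same.
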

\begin{proof}
Suppose that when $z = z^{i*}$ and $z = z^{*}$, we achieve the optimal value of (\ref{eq:alphalL}) for given index sets $L_i$ and $K$ respectively, i.e., $\alpha_{l,L_i} = \frac{ \| z^{i*}_{ L_i  } \|_{1} }{ \| z^{i*} \|_{1} }$ and $\alpha_{k,K} = \frac{\|z^{*}_{K}\|_{1}}{\|z^{*}\|_1}$. Since each element of $K$ appears ${\binom{k-1}{l-1}}$ times in $\{L_i \subseteq K,\; |L_i|=l\}$, we obtain the following inequality:
\par\noindent
\small
\begin{align*}
       \alpha_{k,K} & = \frac{ \| z^{*}_{K} \|_{1} }{ \| z^{*} \|_{1} }  =
       \frac{1}{{\binom{k-1}{l-1}}}  \sum_{ \{L_i \subseteq K,\; |L_i|=l\} } \frac{ \| z^{*}_{ L_i } \|_{1} }{ \| z^{*} \|_{1}} \\
                    & \leq \frac{1}{{\binom{k-1}{l-1}}} \sum_{ \{L_i \subseteq K,\; |L_i|=l\} } \frac{ \| z^{i*}_{ L_i } \|_{1} }{ \| z^{i*} \|_{1} } = CUB(\alpha_{k,K}).
\end{align*}
\normalsize
The inequality is from the optimal value of (\ref{eq:alphalL}) for each index set $L_i$.
\qed
\end{proof}

\begin{lemma}
\label{lemma_pickl_upper_bound}
The pick-$l$ algorithm provides an upper bound of $\alpha_k$, namely
\par\noindent
\small
\begin{align}
\label{pick_l}
       \alpha_k
        \leq
        \frac{1}{{\binom{k-1}{l-1}}}  \sum_{ i=1 }^{\binom{k}{l}} \alpha_{l,L_i},
\end{align}
\vspace{-3mm}
\begin{align}
\label{eq:sorting}
        \text{where}\;\; \alpha_{l,L_1} \geq \alpha_{l,L_2} \geq \cdots \geq \alpha_{l,L_i} \geq \cdots \geq \alpha_{l,L_{\binom{n}{l}}}.
\end{align}
\normalsize
\end{lemma}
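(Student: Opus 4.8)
The plan is to localize the problem to a single worst-case support set, apply the per-subset bound of Lemma \ref{lemma:cheapupper}, and then discard the dependence on that unknown support by exploiting the sorting in (\ref{eq:sorting}). First I would rewrite $\alpha_k$ from (\ref{eq:alphak}) by interchanging the two maximizations, so that $\alpha_k = \max_{\{K:\,|K|\leq k\}} \alpha_{k,K}$, where $\alpha_{k,K}$ is the subset-restricted parameter appearing in Lemma \ref{lemma:cheapupper}. Because enlarging $K$ can only increase $\|z_K\|_1$, the maximizing set may be taken with $|K|=k$; call it $K^{*}$, so that $\alpha_k = \alpha_{k,K^{*}}$.

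The second step is a direct application of Lemma \ref{lemma:cheapupper} to $K^{*}$:
\begin{align*}
    \alpha_k = \alpha_{k,K^{*}} \leq CUB(\alpha_{k,K^{*}}) = \frac{1}{\binom{k-1}{l-1}} \sum_{\{L_i \subseteq K^{*},\; |L_i|=l\}} \alpha_{l,L_i}.
\end{align*}
Here the inner sum runs over precisely the $\binom{k}{l}$ distinct $l$-element subsets contained in the $k$-element set $K^{*}$.

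The only genuinely nontrivial step is to eliminate the dependence on $K^{*}$, whose identity the algorithm does not know. For this I would invoke the ordering (\ref{eq:sorting}): among the full list of $\binom{n}{l}$ values $\alpha_{l,L_i}$, the sum of the $\binom{k}{l}$ largest entries dominates the sum of any other collection of $\binom{k}{l}$ entries. Since the terms indexed by $\{L_i \subseteq K^{*}\}$ form one such collection of size $\binom{k}{l}$, we obtain
\begin{align*}
    \sum_{\{L_i \subseteq K^{*},\; |L_i|=l\}} \alpha_{l,L_i} \leq \sum_{i=1}^{\binom{k}{l}} \alpha_{l,L_i},
\end{align*}
with the right-hand side sorted as in (\ref{eq:sorting}). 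Chaining this with the previous display gives exactly (\ref{pick_l}).

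The main obstacle is conceptual rather than computational: one must recognize that the bound has to hold uniformly over every possible worst-case support $K^{*}$, and that this worst case is controlled simply by taking the $\binom{k}{l}$ largest $\alpha_{l,L_i}$ — which is precisely why the algorithm sorts the values and retains only the top $\binom{k}{l}$. The remaining points are routine bookkeeping: checking that a $k$-set has exactly $\binom{k}{l}$ subsets of size $l$ (matching the summation limit), and that the normalizing constant $\binom{k-1}{l-1}$ inherited from Lemma \ref{lemma:cheapupper} is untouched by the sorting argument.
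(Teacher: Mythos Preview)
Your proposal is correct and follows essentially the same route as the paper: identify the optimal support $K^{*}$, apply Lemma~\ref{lemma:cheapupper} to bound $\alpha_{k,K^{*}}$ by a normalized sum of $\alpha_{l,L_i}$ over the $\binom{k}{l}$ subsets $L_i\subseteq K^{*}$, and then replace that sum by the $\binom{k}{l}$ largest $\alpha_{l,L_i}$'s using the sorting assumption. The only minor addition you make is the explicit remark that the maximizing $K$ can be taken with $|K|=k$, which the paper leaves implicit.
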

\begin{proof}
Without loss of generality, we assume that when $z=z^{i*}$, $i=1,2,...,\binom{n}{l}$, $\alpha_{l,L_i}$'s are obtained in descending order like (\ref{eq:sorting}). It is noteworthy that $\alpha_{k,K}$ is defined for a fixed $K$ set; however, $\alpha_k$ is the maximum value over all the subsets with cardinality $k$. Suppose that when $z=z^{*}$ and $K=K^{*}$, $\alpha_k$ is achieved in (\ref{eq:alphak}). From the aforementioned definitions and similar argument as in Lemma \ref{lemma:cheapupper}, we have:
\setlength{\mathindent}{0pt}
\par\noindent
\small
\begin{align*}
       \alpha_k = \alpha_{k,K^{*}} \leq \frac{1}{{\binom{k-1}{l-1}}}  \sum_{ \{L_i \subseteq K^{*},\; |L_i|=l\} }  \alpha_{l,L_i}
      \leq  \frac{1}{{\binom{k-1}{l-1}}}  \sum_{ i=1 }^{\binom{k}{l}} \alpha_{l,L_i}.
\end{align*}
\normalsize
The first inequality is from Lemma \ref{lemma:cheapupper}, and the last inequality is from the assumption that $\alpha_{l,L_i}$'s are sorted in descending order.
\qed
\end{proof}
\IncMargin{-1.2em}
\small
\begin{algorithm}[t]
\caption{\label{pick_l_algo_pseudo} Pick-$l$-element algorithm, $1 \leq l < k$ for computing an upper bound of $\alpha_k$}
\begin{algorithmic}[1]
\State Given a matrix $A$, calculate $\alpha_{l,L}$'s for all the subsets $L$, $|L|=l$, via (\ref{opt_prob2}).
\State Sort these $\binom{n}{l}$ different values of $\alpha_{l,L}$'s in descending order like (\ref{eq:sorting}).
\State Compute an upper bound of $\alpha_k$ via (\ref{pick_l}).
\State If the upper bound of $\alpha_k$ is larger than 1, then, set the upper bound to 1. If the upper bound is less than $\frac{1}{2}$, then NSC for $k \in \mathbb{Z}^{+}$ is satisfied.
\end{algorithmic}
\end{algorithm}
\DecMargin{-1.2em}
\normalsize

The steps 2 and 3 in Algorithm \ref{pick_l_algo_pseudo}, which are sorting $\alpha_{l,L}$'s and computing an upper bound of $\alpha_k$ with sorted $\alpha_{l,L}$'s via (\ref{pick_l}), can also be done by solving the following optimization problem without sorting operation:
\setlength{\mathindent}{15pt}
\par\noindent
\small
\begin{align}
\label{optimized_coefficients_without_constraint}
        & \underset{\gamma_i,\; 1 \leq i \leq \binom{n}{l}}{ \text{maximize}}  \;\; \sum_{i=1}^{\binom{n}{l}} \gamma_i \; \alpha_{l,L_{i}}  \nonumber \\
        & \text{subject to}  \;\; 0 \leq \gamma_i \leq \frac{1}{{\binom{k-1}{l-1}}},\; 1\leq i \leq \scriptstyle{\binom{n}{l}},  \nonumber \\
        & \quad\quad\quad\quad\;\; \sum_{i=1}^{\binom{n}{l}} \gamma_i \leq \frac{k}{l}.
\end{align}
\normalsize
Here, we note that $\frac{1}{{\binom{k-1}{l-1}}} \times {\binom{k}{l}} = \frac{k}{l}$. Therefore, for the optimal value, the first $\binom{k}{l}$ largest $\alpha_{l,L_i}$'s are chosen with the coefficient $\frac{1}{{\binom{k-1}{l-1}}}$.

The upshot of the pick-$l$ algorithm is that we can reduce number of operations from $\binom{n}{k}$ enumerations to $\binom{n}{l}$. For example, when $n=300$, $k=20$, and $l=2$, the number of operations is reduced by around $10^{26}$ times. Moreover, as $n$ increases, the reduction rate increases. With the reduced enumerations, we can still have non-trivial upper bounds of $\alpha_k$ through the pick-$l$-element algorithm. We will present the performance of the pick-$l$ algorithm in Section \ref{sec_result} showing that the pick-$l$ algorithm provides better upper bounds than the previous research \cite{d2011testing,juditsky2011verifiable} even when $l=2$. Furthermore, thanks to the pick-$l$ algorithm, we can design a new algorithm based on a branch-and-bound search to calculate $\alpha_k$ by using upper bounds of $\alpha_k$ obtained from the pick-$l$ algorithm. It is noteworthy that the cheap upper bound introduced in Lemma \ref{lemma:cheapupper} can provide upper bounds on $\alpha_{k,K}$ for specific subsets $K$, which enable our branch-and-bound method to calculate $\alpha_k$ or more precise bounds on $\alpha_k$. However, LP relaxation method \cite{juditsky2011verifiable} and SDP method \cite{d2011testing} do not provide upper bounds on $\alpha_{k,K}$ for specific subsets $K$, which overwhelms LP and SDP methods to be used in the branch-and-bound method.

Since we are also interested in $k_{max}$, we introduce the following Lemma \ref{kmax_bound} to bound the maximum recoverable sparsity $k_{max}$.
\begin{lemma}
\label{kmax_bound}
The \textit{maximum recoverable sparsity} $k_{max}$ satisfies
\par\noindent
\small
\begin{align}
\label{lemma:k_max}
    k(\alpha_l) \triangleq \bigg\lceil { l \cdot \frac{1/2}{\alpha_l}} \bigg\rceil - 1 \leq k_{max},
\end{align}
\normalsize
where $\lceil{.}\rceil$ is the ceiling function.
\end{lemma}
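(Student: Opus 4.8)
The plan is to reduce the claim to a single inequality on the proportion parameter, namely that $\alpha_{k(\alpha_l)} < \frac{1}{2}$. Since increasing $k$ only relaxes the constraint $|K| \le k$ in the definition (\ref{eq:alphak}), the sequence $\alpha_1 \le \alpha_2 \le \cdots$ is non-decreasing; consequently $k_{max}$ is precisely the largest integer $k$ with $\alpha_k < \frac{1}{2}$. Hence it suffices to exhibit that $k = k(\alpha_l)$ already satisfies $\alpha_{k} < \frac{1}{2}$, and the monotonicity of $\alpha_k$ then forces $k(\alpha_l) \le k_{max}$.

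First I would upgrade Lemma \ref{lemma_pickl_upper_bound} into a clean closed-form bound linking $\alpha_k$ and $\alpha_l$. Observe that for every index set $L_i$ with $|L_i|=l$ we have $\alpha_{l,L_i} \le \alpha_l$, because $\alpha_l = \max_{|L|=l}\alpha_{l,L}$: for a fixed null-space vector $z$ the best size-$l$ set picks its $l$ largest-magnitude coordinates, so the inner maximization in (\ref{eq:alphak}) coincides with selecting the largest of the $\alpha_{l,L}$'s defined in (\ref{eq:alphalL}). Substituting this termwise into (\ref{pick_l}) and invoking the identity $\binom{k}{l}/\binom{k-1}{l-1} = k/l$ already recorded in the text yields
\begin{align*}
  \alpha_k \;\le\; \frac{1}{\binom{k-1}{l-1}}\sum_{i=1}^{\binom{k}{l}}\alpha_{l,L_i}
  \;\le\; \frac{\binom{k}{l}}{\binom{k-1}{l-1}}\,\alpha_l \;=\; \frac{k}{l}\,\alpha_l .
\end{align*}

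With this bound in hand, the remaining step is purely arithmetic. The inequality $\frac{k}{l}\alpha_l < \frac{1}{2}$ holds exactly when $k < \frac{l}{2\alpha_l} = l\cdot\frac{1/2}{\alpha_l}$, and the largest integer satisfying this strict inequality is precisely $\lceil l\cdot\frac{1/2}{\alpha_l}\rceil - 1 = k(\alpha_l)$. Plugging $k = k(\alpha_l)$ into the displayed estimate therefore gives $\alpha_{k(\alpha_l)} \le \frac{k(\alpha_l)}{l}\,\alpha_l < \frac{1}{2}$, which by the first paragraph completes the argument.

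I expect the only genuinely delicate point to be the ceiling bookkeeping in the last step: one must confirm that $\lceil l\cdot\frac{1/2}{\alpha_l}\rceil - 1$ is indeed the largest integer \emph{strictly} below $\frac{l}{2\alpha_l}$ in both the non-integer and the integer cases. In particular, when $\frac{l}{2\alpha_l}\in\mathbb{Z}$ it is the ``$-1$'' that preserves the strict inequality required to conclude $\alpha_k < \frac{1}{2}$ rather than merely $\alpha_k \le \frac{1}{2}$. The two supporting facts, the termwise replacement $\alpha_{l,L_i}\le\alpha_l$ and the monotonicity of $\alpha_k$ in $k$, are both immediate from the definitions and require no additional machinery.
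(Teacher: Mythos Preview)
Your proposal is correct and follows essentially the same route as the paper: both establish the key estimate $\alpha_k \le \frac{k}{l}\,\alpha_l$ and then observe that $k(\alpha_l)=\lceil l/(2\alpha_l)\rceil-1$ is strictly below $l/(2\alpha_l)$, forcing $\alpha_{k(\alpha_l)}<\tfrac12$. The only cosmetic difference is that you reach the displayed bound by invoking Lemma~\ref{lemma_pickl_upper_bound} and replacing each $\alpha_{l,L_i}$ by $\alpha_l$, whereas the paper reaches it directly from Lemma~\ref{lemma:cheapupper} applied at $K=K^{*}$; your added remarks on the monotonicity of $\alpha_k$ and the ceiling bookkeeping make the argument slightly more explicit than the paper's version but do not change its substance.
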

\begin{proof}
To prove this lemma, we will show that when $k = \big\lceil { l \cdot \frac{1/2}{\alpha_l}} \big\rceil - 1$, $\alpha_k < \frac{1}{2}$. This can be concluded from the upper bound of $\alpha_k$ given as follows:
\par\noindent
\small
\begin{align}
\label{recoverable_sparsity}
    \alpha_k = \alpha_{k,K^{*}} & \leq \frac{1}{\binom{k-1}{l-1}} \sum_{ \{L_i \subseteq K^{*},\; |L_i|=l\} } \alpha_{l,L_i}\nonumber\\
             & \leq  \frac{\binom{k}{l}}{\binom{k-1}{l-1}} \alpha_l = \alpha_l \cdot \frac{k}{l}.
\end{align}
\normalsize
Note that there are $\binom{k}{l}$ terms in the summation. From (\ref{recoverable_sparsity}), if  $\alpha_l \cdot \frac{k}{l} < \frac{1}{2}$, then  $\alpha_k < \frac{1}{2}$. In other words, if $k < l \cdot \frac{1/2}{\alpha_l}$, then $\alpha_k < \frac{1}{2}$. Since $k$ is a positive integer, when $k = \big\lceil { l \cdot \frac{1/2}{\alpha_l}} \big\rceil - 1$, $\alpha_k < \frac{1}{2}$. Therefore, the maximum recoverable sparsity $k_{max}$ should be larger than or at least equal to $\big\lceil {l \cdot  \frac{1/2}{\alpha_l}} \big\rceil - 1$.
\qed
\end{proof}
It is noteworthy that in \cite[Section 4.2.B]{juditsky2011verifiable}, the authors introduced lower bound on $k$ based on $\alpha_1$, i.e., $k(\alpha_1)$. However, in Lemma \ref{kmax_bound}, we provide a more general result. Furthermore, in Lemma \ref{kmax_bound}, instead of using $\alpha_l$, we can use an upper bound of $\alpha_l$ to obtain the recoverable sparsity $k$; namely, $k(UB(\alpha_l)) = \bigg\lceil { l \cdot \frac{1/2}{UB(\alpha_l)}} \bigg\rceil - 1 \leq k_{max}$, where $UB(\alpha_l)$ represents an upper bound of $\alpha_l$. Since the proof follows the same track as the proof of Lemma \ref{kmax_bound}, we omit the proof.

Finally, we introduce the following proposition to compare our algorithm to LP method \cite{juditsky2011verifiable} theoretically.
\begin{proposition}
\label{thm:LPvsPick1}
For any integer $k\geq 1$, let $\alpha_{k}^{pick1}$ be the upper bound on $\alpha_k$ provided by the pick-$1$-element algorithm according to Lemma \ref{lemma_pickl_upper_bound}.
Let $\alpha_k^{LP}$ be the upper bound on $\alpha_k$ provided by the LP method \cite{juditsky2011verifiable} according to the following definition (namely Equation (4.25) in \cite{juditsky2011verifiable} with $\beta=\infty$) :
\setlength{\mathindent}{0pt}
\par\noindent
\small
\begin{align*}
    \alpha_{k}^{LP} = \underset{Y=[y_1,...,y_n]\in \mathbb{R}^{m\times n}}{ \text{minimize}}\;\bigg\{ \underset{1\leq j \leq n}{ \text{maximize}}\; ||(I-Y^T A)e_j||_{k,1}\; \bigg\},
\end{align*}
\normalsize
where $e_j$ is the standard basis vector with the $j$-th element equal to $1$, and $\|\cdot\|_{k,1}$ stands for the sum of $k$ maximal magnitudes of components of a vector.
Then we have:
\setlength{\mathindent}{15pt}
\par\noindent
\small
\begin{align}
 \alpha_{k}^{pick1} \geq  \alpha_k^{LP}.
\end{align}
\normalsize
\end{proposition}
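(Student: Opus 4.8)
The plan is to prove the inequality by producing one explicit matrix $Y$ that is feasible for the LP problem and whose objective value does not exceed $\alpha_k^{pick1}$. Since $\alpha_k^{LP}$ is the infimum of that objective over all $Y\in\mathbb{R}^{m\times n}$, exhibiting this single witness suffices to conclude $\alpha_k^{LP}\le \alpha_k^{pick1}$.

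First I would make the pick-$1$ bound explicit. Putting $l=1$ in Lemma \ref{lemma_pickl_upper_bound} gives $\binom{k-1}{0}=1$ and $\binom{k}{1}=k$, so $\alpha_k^{pick1}$ is simply the sum of the $k$ largest numbers among $\alpha_{1,\{1\}},\dots,\alpha_{1,\{n\}}$, where by (\ref{eq:alphalL}) each $\alpha_{1,\{j\}}=\max_{z:Az=0}|z_j|/\|z\|_1$. Next I would dualize each single-coordinate parameter. Because $|z_j|=\max(z_j,-z_j)$ and the feasible set $\{Az=0,\ \|z\|_1\le 1\}$ is symmetric under $z\mapsto -z$, we have $\alpha_{1,\{j\}}=\max\{e_j^{T}z:\ Az=0,\ \|z\|_1\le 1\}$. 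Attaching a free multiplier $y\in\mathbb{R}^m$ to the equality constraint and swapping $\max$ and $\min$ (the $\ell_1$ ball is compact and the Lagrangian is bilinear, so strong LP duality applies) yields the $\ell_\infty$ form $\alpha_{1,\{j\}}=\min_{y}\|e_j-A^{T}y\|_\infty$. Let $\hat y_j$ attain this minimum; then entrywise $|\delta_{ij}-\hat y_j^{T}A_{\cdot i}|\le \alpha_{1,\{j\}}$ for every $i$, where $A_{\cdot i}$ denotes the $i$-th column of $A$.

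The decisive step is to stack these certificates: set $Y=[\hat y_1,\dots,\hat y_n]$, so that the $(i,j)$ entry of $I-Y^{T}A$ equals $\delta_{ij}-\hat y_i^{T}A_{\cdot j}$. Applying the bound associated with the certificate $\hat y_i$ at the column index $j$ gives the uniform estimate $|(I-Y^{T}A)_{ij}|\le \alpha_{1,\{i\}}$, valid for all $i,j$; the key feature is that this bound on entry $(i,j)$ depends only on the row index $i$. Hence every column of $I-Y^{T}A$ is dominated, entry by entry in magnitude, by the vector $(\alpha_{1,\{1\}},\dots,\alpha_{1,\{n\}})^{T}$. Since a vector dominated entrywise in magnitude by another has a no-larger sum of its $k$ largest magnitudes, for each $j$ we obtain $\|(I-Y^{T}A)e_j\|_{k,1}\le \alpha_k^{pick1}$, the right-hand side being precisely the sum of the $k$ largest $\alpha_{1,\{i\}}$. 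Taking $\max_j$ and then $\min_Y$ then delivers $\alpha_k^{LP}\le \alpha_k^{pick1}$.

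I expect the principal obstacle to be the middle step: deriving the $\ell_\infty$ dual representation of $\alpha_{1,\{j\}}$ and, above all, recognizing that placing the $n$ individual dual optimizers as the columns of a single $Y$ produces a column-uniform bound in which entry $(i,j)$ is controlled by $\alpha_{1,\{i\}}$ alone. Once that structural observation is secured, the comparison of the two $\|\cdot\|_{k,1}$ quantities is a routine majorization argument.
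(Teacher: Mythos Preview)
Your proof is correct and rests on the same two ingredients as the paper's: the LP duality identity $\alpha_{1,\{i\}}=\min_{y}\|e_i-A^{T}y\|_\infty$ and the decoupling of the outer minimization across the columns of $Y$. The organization differs, however. The paper argues ``top-down'': starting from $\alpha_k^{LP}$, it relaxes the inner maximization by allowing a separate column index $j_t$ for each row index $i_t$, obtaining $\alpha_k^{LP}\le \min_{Y}\max_{i_1,\dots,i_k}\sum_t \|e_{i_t}-A^{T}y_{i_t}\|_\infty$, then swaps $\min$ and $\max$ (legitimate because the terms decouple) and finally applies duality termwise. You instead argue ``bottom-up'' by constructing an explicit witness $Y=[\hat y_1,\dots,\hat y_n]$ from the individual dual optimizers and reading off the entrywise bound $|(I-Y^{T}A)_{ij}|\le \alpha_{1,\{i\}}$, from which the $\|\cdot\|_{k,1}$ comparison is immediate by majorization. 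Your route is a bit more direct and makes the role of the decoupling transparent (each column of $Y$ certifies one row of $I-Y^{T}A$), while the paper's chain of inequalities avoids explicitly naming a minimizer; substantively the two arguments are equivalent.
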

For readability, we place the proof of Theorem \ref{thm:LPvsPick1} in Appendix A.

The LP method can provide tighter upper bounds on $\alpha_k$ than the pick-$1$-element algorithm, however this comes at a cost of solving a big optimization problem of design dimension $mn$.  When $m$ and $n$ are large, the complexity of computing $\alpha_k^{LP}$ can be prohibitive (please see Table 2).

\section{Optimized Pick-$l$ Algorithm}
\label{Sec2}
We can tighten the upper bound of $\alpha_k$ in the pick-$l$ algorithm by replacing the constant factor $\frac{1}{\binom{k-1}{l-1}}$ in (\ref{pick_l}) with optimized coefficients at the cost of additional complexity, which we call as the \textit{optimized} pick-$l$ algorithm. This optimized pick-$l$ algorithm is mostly useful from a theoretical perspective. In practice, it gives improved but similar performance in calculating the upper bound of $\alpha_k$ to the basic pick-$l$ algorithm described in Section \ref{Sec1}. As a theoretical merit of the \textit{optimized} pick-$l$ algorithm, we can show that as $l$ increases, the upper bound of $\alpha_k$ becomes smaller or stays the same.

The optimized pick-$l$ algorithm provides an upper bound of $\alpha_k$ via the following optimization problem:
\setlength{\mathindent}{15pt}
\par\noindent
\small
\begin{align}
\label{pickl_optimized_coefficients}
        & \underset{\gamma_i,\; 1 \leq i \leq \binom{n}{l}}{ \text{maximize}}                 \;\; \sum_{i=1}^{\binom{n}{l}} \gamma_i \; \alpha_{l,L_{i}} \nonumber \\
        & \text{subject to}  \;\; \gamma_i \geq 0,\; 1 \leq i \leq \scriptstyle{\binom{n}{l}},   \nonumber \\
        & \quad\quad\quad\quad\;\; \sum_{i=1}^{\binom{n}{l}} \gamma_i \leq \frac{k}{l},   \\
        & \quad\quad \sum_{ \{i:\; B \subseteq L_i,\; 1\leq i \leq \binom{n}{l} \} } \gamma_i  \leq \frac{\binom{k-b}{l-b}}{\binom{k-1}{l-1}},
                            \;\;\begin{subarray}{l} {\text{$\forall$ $b \in \mathbb{Z}^{+}$ s.t. $1\leq b \leq l$},} \\ {\text{$\forall$ $B$ with $|B|=b$}} \end{subarray}. \nonumber
\end{align}
\normalsize

In the following lemmata, we show that the optimized pick-$l$ algorithm produces an upper bound of $\alpha_k$ and this bound is tighter than that of the basic pick-$l$ algorithm introduced in (\ref{optimized_coefficients_without_constraint}). The last lemma establishes that as $l$ increases, the upper bound of $\alpha_k$ decreases or stays the same.
\begin{lemma}
\label{pickl_optimized_coefficients_upper_bound}
The optimized pick-$l$ algorithm provides an upper bound of $\alpha_k$.
\end{lemma}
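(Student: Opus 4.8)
The plan is to show that the optimized pick-$l$ algorithm, defined through the linear program \eqref{pickl_optimized_coefficients}, yields a quantity no smaller than $\alpha_k$. The essential observation is that the specific coefficient assignment used in the basic pick-$l$ algorithm is itself a feasible point of \eqref{pickl_optimized_coefficients}, and since \eqref{pickl_optimized_coefficients} is a maximization over a larger feasible set containing that point, its optimal value is at least the basic bound, which already dominates $\alpha_k$ by Lemma \ref{lemma_pickl_upper_bound}. Thus the cleanest route is a direct feasibility-and-upper-bound argument rather than re-deriving everything from scratch.

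First I would fix the optimal pair $(z^{*}, K^{*})$ achieving $\alpha_k = \alpha_{k,K^{*}}$ in \eqref{eq:alphak}, exactly as in the proof of Lemma \ref{lemma_pickl_upper_bound}. I would then define a candidate coefficient vector $\gamma$ by setting $\gamma_i = \frac{1}{\binom{k-1}{l-1}}$ for every index $i$ with $L_i \subseteq K^{*}$ and $\gamma_i = 0$ otherwise. The next step is to verify that this $\gamma$ is feasible for \eqref{pickl_optimized_coefficients}. Nonnegativity is immediate. The constraint $\sum_i \gamma_i \leq \frac{k}{l}$ holds with equality because there are exactly $\binom{k}{l}$ subsets $L_i$ inside $K^{*}$ and $\frac{1}{\binom{k-1}{l-1}}\binom{k}{l} = \frac{k}{l}$. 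For the remaining family of constraints, fix any $b$ with $1 \leq b \leq l$ and any set $B$ with $|B|=b$; I would argue that among the subsets $L_i \subseteq K^{*}$, those additionally containing $B$ number exactly $\binom{k-b}{l-b}$ when $B \subseteq K^{*}$ (and fewer, indeed zero such $L_i \subseteq K^{*}$, when $B \not\subseteq K^{*}$), so that $\sum_{\{i:\,B \subseteq L_i\}} \gamma_i \leq \frac{\binom{k-b}{l-b}}{\binom{k-1}{l-1}}$ in all cases. This is the combinatorial heart of the feasibility check.

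With feasibility established, the objective value at this $\gamma$ is $\frac{1}{\binom{k-1}{l-1}} \sum_{\{L_i \subseteq K^{*}\}} \alpha_{l,L_i}$, which is precisely the Cheap Upper Bound $CUB(\alpha_{k,K^{*}})$ and which, by Lemma \ref{lemma:cheapupper}, is at least $\alpha_{k,K^{*}} = \alpha_k$. Since the optimal value of the maximization \eqref{pickl_optimized_coefficients} is no smaller than the objective at any feasible point, the optimized pick-$l$ bound is at least $\alpha_k$, which is the desired conclusion.

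I expect the main obstacle to be the verification of the last constraint family, i.e.\ confirming that the count of subsets $L_i$ with $B \subseteq L_i \subseteq K^{*}$ equals $\binom{k-b}{l-b}$ and that this is the worst case across all $B$ of size $b$. The counting itself is elementary---after fixing the $b$ elements of $B$ inside $K^{*}$, one chooses the remaining $l-b$ elements of $L_i$ from the $k-b$ remaining elements of $K^{*}$---but care is needed to treat the case $B \not\subseteq K^{*}$, where the sum over $\gamma_i$ restricted to $L_i \subseteq K^{*}$ simply vanishes and the constraint holds trivially. Once this bookkeeping is handled, the rest of the argument is an immediate application of feasibility together with Lemma \ref{lemma:cheapupper}.
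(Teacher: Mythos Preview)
Your proposal is correct and follows essentially the same approach as the paper: both exhibit the coefficient vector $\gamma_i = \tfrac{1}{\binom{k-1}{l-1}}$ for $L_i \subseteq K^{*}$ (zero otherwise), verify its feasibility for \eqref{pickl_optimized_coefficients}, and conclude via Lemma~\ref{lemma:cheapupper} that the objective at this feasible point already dominates $\alpha_k$. Your treatment of the third constraint family is in fact slightly cleaner than the paper's, since you give a uniform counting argument for all $b$ (including the case $B \not\subseteq K^{*}$) rather than checking $b=1$ and $b=l$ separately.
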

\begin{proof}
The strategy to prove Lemma \ref{pickl_optimized_coefficients_upper_bound} is to show that one feasible solution of (\ref{pickl_optimized_coefficients}) gives an upper bound of $\alpha_k$. Suppose when $K=K^{*}$, $\alpha_k$ is achieved, i.e., $\alpha_k=\alpha_{k,K^{*}}$. For a feasible solution, let us choose $\gamma_{i} = \frac{1}{\binom{k-1}{l-1}}$ when $L_i \subseteq K^*$, and $\gamma_{i} = 0$ otherwise, which we can easily check whether it satisfies the first and second constraints of (\ref{pickl_optimized_coefficients}). For the third constraint, let us check the case when $b=l$ first. For $b=l$, we can choose an arbitrary index set $B$ such that $|B|=b=l$. For the chosen $B$, there is only one $L_i$ such that $B \subseteq L_i$, which is itself, i.e., $B=L_i$. For other chosen $B$'s, it is the same. Hence, the third constraint represents
\par\noindent
\small
\begin{align}
\label{ieq:thrid_const_l}
         \gamma_i \leq \frac{1}{\binom{k-1}{l-1}},\; i=1,2,...,\binom{n}{l}.
\end{align}
\normalsize
For $b=1$, the third constraint represents
\par\noindent
\small
\begin{align}
\label{ieq:third_const}
         \sum_{ \{i:\; B\subseteq L_i,\;1\leq i \leq \binom{n}{l},\; |B|=1 \} } \gamma_i \leq 1.
\end{align}
\normalsize
Note that there are $\binom{n-1}{l-1}$ numbers of $L_i$'s which have an index set $B$ as a subset. Among $\binom{n-1}{l-1}$ numbers of $\gamma_i$'s, only $\gamma_i$'s whose corresponding $L_i$'s are the subsets of $K^{*}$ are $\frac{1}{\binom{k-1}{l-1}}$. Since each element in $L_i$ such that $L_i \subseteq K^{*}$ appears $\binom{k-1}{l-1}$ times in $\{i:\; L_i \subseteq K^{*},\; 1\leq i \leq \binom{n}{l} \}$, the summation of $\gamma_i$, where the corresponding $L_i$'s are the subset of $K^{*}$, becomes $\frac{1}{\binom{k-1}{l-1}}\times \binom{k-1}{l-1} = 1$, which satisfies (\ref{ieq:third_const}). Basically, the third constraint makes that for an index, the summation of coefficients related to the index is limited to 1. In the same way, for $1<b<l$, the chosen $\gamma_i$ is a feasible solution of (\ref{pickl_optimized_coefficients}). From this feasible solution, we have $\frac{1}{{\binom{k-1}{l-1}}} \sum_{ \{i:\; L_i \subseteq K^{*},\; |L_i|=l\} } \alpha_{l,L_i}$ for the optimal value, which is an upper bound of $\alpha_k$ as shown in (\ref{recoverable_sparsity}).
\qed
\end{proof}

\begin{lemma}
\label{pickl_optimized_coefficients_vs_the_pickl_element_algo}
The optimized pick-$l$ algorithm provides a tighter, or at least the same, upper bound of $\alpha_k$ than the basic pick-$l$ algorithm introduced in (\ref{optimized_coefficients_without_constraint}).
\end{lemma}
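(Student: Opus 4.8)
The plan is to prove the claim entirely at the level of the two linear programs, by showing that the feasible set of the optimized program (\ref{pickl_optimized_coefficients}) is contained in the feasible set of the basic program (\ref{optimized_coefficients_without_constraint}). Since both programs maximize exactly the same linear objective $\sum_{i=1}^{\binom{n}{l}} \gamma_i\,\alpha_{l,L_i}$ over the same decision variables $\gamma_i$, a containment of feasible regions immediately forces the optimal value of (\ref{pickl_optimized_coefficients}) to be less than or equal to that of (\ref{optimized_coefficients_without_constraint}). As both optimal values are the respective upper bounds on $\alpha_k$ (established in the preceding lemmata), this inequality is precisely the assertion that the optimized pick-$l$ bound is tighter than, or at worst equal to, the basic one.

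To carry this out, I would first observe that two of the three constraint families coincide across the two programs: the nonnegativity $\gamma_i \geq 0$ and the budget constraint $\sum_{i} \gamma_i \leq \frac{k}{l}$ appear verbatim in both. Hence the only thing left to verify is that every $\gamma$ satisfying the third constraint family of (\ref{pickl_optimized_coefficients}), namely $\sum_{\{i:\; B \subseteq L_i\}} \gamma_i \leq \frac{\binom{k-b}{l-b}}{\binom{k-1}{l-1}}$ for all cardinalities $1 \le b \le l$ and all index sets $B$ with $|B|=b$, automatically obeys the per-coordinate box bound $\gamma_i \leq \frac{1}{\binom{k-1}{l-1}}$ that is imposed in (\ref{optimized_coefficients_without_constraint}).

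The key step is to specialize the third constraint family to the extreme case $b = l$. When $|B| = l$, the only index set $L_i$ of cardinality $l$ that contains $B$ is $B$ itself, so the left-hand side collapses to the single term $\gamma_i$ with $L_i = B$, while the right-hand side simplifies via $\binom{k-l}{l-l} = \binom{k-l}{0} = 1$ to $\frac{1}{\binom{k-1}{l-1}}$. Thus the $b=l$ instance of the third constraint family reproduces exactly the box upper bound of (\ref{optimized_coefficients_without_constraint}). Because the optimized program imposes this constraint together with the additional constraints for $1 \le b < l$, its feasible set is a subset of that of the basic program, and the claimed inequality between the two upper bounds follows.

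I do not anticipate a genuine obstacle here, as the argument reduces to a short feasible-set inclusion. The only point requiring care is the binomial bookkeeping in the $b=l$ reduction, together with the elementary observation that the extra constraints for $b<l$ can only shrink the feasible region, and therefore can never increase the optimal value of the maximization.
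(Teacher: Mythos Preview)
Your proposal is correct and follows essentially the same approach as the paper's own proof. The paper likewise argues that (\ref{optimized_coefficients_without_constraint}) is a relaxation of (\ref{pickl_optimized_coefficients}) by observing that the $b=l$ instance of the third constraint family reproduces the box bound $\gamma_i \leq \frac{1}{\binom{k-1}{l-1}}$, while the remaining instances $1\le b<l$ are additional constraints that can only shrink the feasible set; your write-up is in fact a bit more explicit about the binomial reduction $\binom{k-l}{0}=1$, but the argument is otherwise identical.
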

\begin{proof}
We will show that the optimization problem (\ref{optimized_coefficients_without_constraint}) is a relaxation of (\ref{pickl_optimized_coefficients}). As in the proof of Lemma \ref{pickl_optimized_coefficients_upper_bound}, for $b=l$, the third constraint of (\ref{pickl_optimized_coefficients}) represents (\ref{ieq:thrid_const_l}), which is involved in the first constraint of (\ref{optimized_coefficients_without_constraint}). Since the third constraint of (\ref{pickl_optimized_coefficients}) considers other $b$ values such that $1 \leq b < l$, (\ref{pickl_optimized_coefficients}) has more constraints than (\ref{optimized_coefficients_without_constraint}).
Therefore, the optimized pick-$l$ algorithm, which is (\ref{pickl_optimized_coefficients}), provides a tighter or at least the same upper bound than the basic pick-$l$ algorithm.
\qed
\end{proof}

\begin{lemma}
\label{lemma3_various_coefficients}
The optimized pick-$l$ algorithm provides a tighter or at least the same upper bound than the optimized pick-$p$ algorithm when $l > p$.
\end{lemma}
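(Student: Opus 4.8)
The plan is to compare the optimal objective values of the optimized pick-$l$ program (\ref{pickl_optimized_coefficients}) written with parameter $l$ and with parameter $p$; call these optimal values $V_l$ and $V_p$. By Lemma \ref{pickl_optimized_coefficients_upper_bound} both are upper bounds on $\alpha_k$, so the claim is exactly $V_l \le V_p$. The strategy is to start from an optimal coefficient vector $\gamma^{*}=(\gamma_i^{*})$ for the $l$-program and manufacture from it a \emph{feasible} coefficient vector $\delta=(\delta_{P})$ for the $p$-program (indexed by the $p$-subsets $P$) whose objective value is at least $V_l$. Since $V_p$ is the maximum over all feasible $\delta$, this would immediately yield $V_l \le V_p$.

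To build $\delta$, I would apply the Cheap Upper Bound of Lemma \ref{lemma:cheapupper}, but with the two cardinalities played by $l$ (the larger) and $p$ (the smaller): for each $l$-subset $L_i$ we have $\alpha_{l,L_i} \le \frac{1}{\binom{l-1}{p-1}} \sum_{\{P \subseteq L_i,\,|P|=p\}} \alpha_{p,P}$, which is legitimate because $p < l$. Substituting this into $V_l = \sum_i \gamma_i^{*}\,\alpha_{l,L_i}$ and interchanging the order of summation suggests the definition
\begin{align*}
    \delta_{P} \triangleq \frac{1}{\binom{l-1}{p-1}} \sum_{\{i:\, P \subseteq L_i\}} \gamma_i^{*},
\end{align*}
so that $V_l \le \sum_{P} \delta_{P}\,\alpha_{p,P}$. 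It then remains to verify that $\delta$ satisfies the three constraint families of (\ref{pickl_optimized_coefficients}) written for parameter $p$.

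Nonnegativity is immediate. For the total-mass constraint, summing $\delta_P$ over all $p$-subsets counts each $L_i$ exactly $\binom{l}{p}$ times, so $\sum_P \delta_P = \frac{\binom{l}{p}}{\binom{l-1}{p-1}} \sum_i \gamma_i^{*} = \frac{l}{p}\sum_i \gamma_i^{*} \le \frac{l}{p}\cdot\frac{k}{l} = \frac{k}{p}$, using the identity $\binom{l}{p}=\frac{l}{p}\binom{l-1}{p-1}$ together with the second constraint of the $l$-program. For the third family, fix $b$ with $1 \le b \le p$ and a set $B$ with $|B|=b$; the number of $p$-subsets $P$ with $B \subseteq P \subseteq L_i$ equals $\binom{l-b}{p-b}$ whenever $B \subseteq L_i$ and is zero otherwise, whence $\sum_{\{P:\,B\subseteq P\}} \delta_P = \frac{\binom{l-b}{p-b}}{\binom{l-1}{p-1}} \sum_{\{i:\,B\subseteq L_i\}}\gamma_i^{*}$. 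Here I would invoke the corresponding $(b,B)$ constraint of the $l$-program — available precisely because $b \le p < l$ — to bound $\sum_{\{i:\,B\subseteq L_i\}}\gamma_i^{*} \le \binom{k-b}{l-b}/\binom{k-1}{l-1}$.

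The crux, and the step I expect to require the most care, is then the purely combinatorial verification that the resulting product collapses to exactly the bound demanded by the $p$-program, namely $\frac{\binom{l-b}{p-b}}{\binom{l-1}{p-1}}\cdot\frac{\binom{k-b}{l-b}}{\binom{k-1}{l-1}} = \frac{\binom{k-b}{p-b}}{\binom{k-1}{p-1}}$. Expanding all four binomial coefficients into factorials and cancelling common factors, both sides reduce to $\frac{(p-1)!\,(k-b)!}{(p-b)!\,(k-1)!}$, so the third constraint holds with equality. Once feasibility of $\delta$ is established, $\sum_P \delta_P \alpha_{p,P} \le V_p$, and combining this with $V_l \le \sum_P \delta_P \alpha_{p,P}$ gives $V_l \le V_p$, which is precisely the asserted monotonicity.
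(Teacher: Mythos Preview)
Your proposal is correct and follows essentially the same approach as the paper: apply the Cheap Upper Bound to replace each $\alpha_{l,L_i}$ by a combination of $\alpha_{p,P}$'s, define the induced coefficients $\delta_P=\frac{1}{\binom{l-1}{p-1}}\sum_{\{i:\,P\subseteq L_i\}}\gamma_i^{*}$, and verify feasibility for the $p$-program via binomial identities. Your counting for the third constraint is in fact slightly cleaner than the paper's, which passes through the expression $\frac{\binom{n-b}{p-b}\binom{n-p}{l-p}}{\binom{n-b}{l-b}}$ before simplifying it to the same $\binom{l-b}{p-b}$ you obtain directly.
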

\begin{proof}
We can upper bound the objective function of (\ref{pickl_optimized_coefficients}) by using (\ref{eq:cheapbound}) as follows:
\par\noindent
\small
\begin{align}
\label{lemma3_optimized_coefficients_proof2}
        & \underset{\gamma_i,\; 1 \leq i \leq \binom{n}{l}}{\text{maximize}} \;\; \frac{1}{\binom{l-1}{p-1}}  \sum_{i=1}^{\binom{n}{l}} \gamma_i \sum_{ \{j :\; P_j \subset L_i,\; |P_j|=p \} }  \alpha_{p,P_{j}}  \nonumber \\
        & \text{subject to} \;\; \gamma_i \geq 0,\; 1 \leq i \leq \scriptstyle{\binom{n}{l}},   \nonumber \\
        & \quad\quad\quad\quad\;\; \sum_{i=1}^{\binom{n}{l}} \gamma_i \leq \frac{k}{l},    \\
        & \quad\quad \sum_{ \{i:\; B \subseteq L_i,\; 1\leq i \leq \scriptstyle{\binom{n}{l}} \} } \gamma_i  \leq \frac{\binom{k-b}{l-b}}{\binom{k-1}{l-1}},
                            \;\;\begin{subarray}{l} {\text{$\forall$ $b \in \mathbb{Z}^{+}$ s.t. $1\leq b \leq l$},} \\ {\text{$\forall$ $B$ with $|B|=b$}} \end{subarray}. \nonumber
\end{align}
\normalsize
Note that in the objective function of (\ref{lemma3_optimized_coefficients_proof2}), each $\alpha_{p,P_j},\; 1\leq j\leq \binom{n}{p}$, appears $\binom{n-p}{l-p}$ times. Let us define
\par\noindent
\small
\begin{align*}
    \gamma_j^{'} \triangleq \frac{1}{\binom{l-1}{p-1}} \sum_{ \{i:\; P_j \subset L_i,\; 1\leq i \leq \binom{n}{l}  \} } \gamma_i.
\end{align*}
\normalsize
We can relax (\ref{lemma3_optimized_coefficients_proof2}) to the following problem, which turns out to be the same as the optimized pick-$p$ algorithm:
\par\noindent
\small
\begin{align}
\label{lemma3_optimized_coefficients_proof3}
        & \underset{\gamma_j^{'},\; 1 \leq j \leq \binom{n}{p}}{\text{maximize}}                \;\;  \sum_{j=1}^{\binom{n}{p}} \gamma_j^{'} \; \alpha_{p,P_{j}}  \nonumber \\
        & \text{subject to}  \;\; \gamma_j^{'} \geq 0,\; 1 \leq j \leq \scriptstyle{\binom{n}{p}},   \nonumber \\
        & \quad\quad\quad\quad\;\; \sum_{j=1}^{\binom{n}{p}} \gamma_j^{'} \leq \frac{k}{p},    \\
        & \quad\quad \sum_{ \{j:\; B \subseteq P_j,\; 1\leq j \leq \binom{n}{p} \} } \gamma_j^{'}  \leq  \frac{ \binom{k-b}{p-b} }{ \binom{k-1}{p-1} },
                            \;\;\begin{subarray}{l} {\text{$\forall$ $b \in \mathbb{Z}^{+}$ s.t. $1\leq b \leq p$},} \\ {\text{$\forall$ $B$ with $|B|=b$}} \end{subarray}. \nonumber
\end{align}
\normalsize
The relaxation is shown by checking the constraints. The first constraint of (\ref{lemma3_optimized_coefficients_proof3}) is trivial to obtain. For the second constraint, we can obtain the second constraint of (\ref{lemma3_optimized_coefficients_proof3}) from the following relations:
\par\noindent
\small
\begin{align*}
         \sum_{j=1}^{\binom{n}{p}} \gamma_j^{'}
         & = \sum_{j=1}^{\binom{n}{p}} \frac{1}{\binom{l-1}{p-1}} \sum_{ \substack{ \{i:\; P_j \subset L_i, \; 1\leq i \leq \binom{n}{l} \}} } \gamma_i \\
         & = \frac{1}{\binom{l-1}{p-1}} \binom{l}{p} \sum_{i=1}^{\binom{n}{l}} \gamma_i \\
         & \leq \frac{1}{\binom{l-1}{p-1}} \binom{l}{p} \frac{k}{l} = \frac{k}{p},
\end{align*}
\normalsize
where the second equality is obtained from the fact that $\gamma_i$, which is a coefficient of $\alpha_{l,L_i}$, appears $\binom{l}{p}$ times in $\sum_{j=1}^{\binom{n}{p}} \sum_{ \substack{ \{i:\; P_j \subset L_i \}} } \gamma_i$. The final inequality is from the second constraint of (\ref{lemma3_optimized_coefficients_proof2}). The third constraint in (\ref{lemma3_optimized_coefficients_proof3}) can be deduced from the following inequality:
\par\noindent
\small
\begin{align*}
        & \sum_{ \{ j:\; B \subseteq P_j,\; 1\leq j \leq \binom{n}{p}\} } \gamma_j^{'} \\
        & = \frac{1}{\binom{l-1}{p-1}} \sum_{ \{j:\; B \subseteq P_j,\; 1\leq j \leq \binom{n}{p}\} }  \sum_{ \{i :\; P_j \subset L_i,\; 1\leq i \leq \binom{n}{l}  \} } \gamma_i   \nonumber \\
        & = \frac{1}{\binom{l-1}{p-1}}  \frac{ \binom{n-b}{p-b} \binom{n-p}{l-p} }{ \binom{n-b}{l-b} } \sum_{ \{i :\; B \subset L_i,\; 1\leq i \leq \binom{n}{l} \} } \gamma_i   \nonumber \\
        & \leq \frac{1}{\binom{l-1}{p-1}}  \frac{ \binom{n-b}{p-b} \binom{n-p}{l-p} }{ \binom{n-b}{l-b} }  \frac{\binom{k-b}{l-b}}{\binom{k-1}{l-1}}, \; 1 \leq b \leq p \nonumber \\
        & = \frac{ \binom{k-b}{p-b} }{ \binom{k-1}{p-1} }, \;1 \leq b \leq p,
\end{align*}
\normalsize
where the second equality is from the fact that for a fixed $P_j$, there are $\binom{n-p}{l-p}$ numbers of $L_i$'s, where $P_j \subset L_i$, $i=1,...,\binom{n}{l}$; for a fixed $B$, there are $\binom{n-b}{p-b}$ numbers of $P_j$'s, where $B \subset P_j$, $j=1,...,\binom{n}{p}$, and $\binom{n-b}{l-b}$ numbers of $L_i$'s, where $B \subset L_i$, $i=1,...,\binom{n}{l}$. Since (\ref{lemma3_optimized_coefficients_proof3}) is obtained from the relaxation of (\ref{lemma3_optimized_coefficients_proof2}), the optimal value of (\ref{lemma3_optimized_coefficients_proof3}) is larger or equal to the optimal value of (\ref{lemma3_optimized_coefficients_proof2}). (\ref{lemma3_optimized_coefficients_proof3}) is just the optimized pick-$p$ algorithm. Thus, when $l > p$, the optimized pick-$l$ algorithm provides a tighter or at least the same upper bound than the optimized pick-$p$ algorithm.
\qed
\end{proof}

By using larger $l$ in the pick-$l$ algorithm, we can obtain a tighter upper bound of $\alpha_k$. However, for a certain $l$, we need to enumerate $\binom{n}{l}$ possibilities,  and this becomes infeasible when $l$ is large.  Moreover,  when $l<k$, the pick-$l$ algorithm only gives an upper bound of $\alpha_{k}$, instead of an exact value of $\alpha_{k}$. There is, however, a need to find tighter bounds on $\alpha_{k}$ , or to even find the exact value of $\alpha_{k}$, when $k$ is too large for $\binom{n}{k}$ enumerations of exhaustive search \cite{xu2011Compressive,firooz2010network,coates2001network}.  To this end, we propose a new branch-and-bound tree search algorithm to  find tighter bounds on $\alpha_{k}$ than Lemma \ref{lemma_pickl_upper_bound} provides, or to even find the exact $\alpha_{k}$ .   Our branch-and-bound tree search algorithm is enabled by the pick-$l$ algorithms introduced in Sections \ref{Sec1} and \ref{Sec2}.

\section{Tree Search Algorithm}
\label{Sec3}
To find the index set $K^{*}$ which leads to the maximum $\alpha_{k,K}$ (among all possible index set $K$'s), the Tree Search Algorithm (TSA) performs a best-first branch-and-bound search \cite{brassard1996fundamentals} over a tree structure representing different subsets of $\{1,2, ..., n\}$.  In its essence, for each subset $J$ with cardinality no bigger than $k$, TSA calculates an upper bound of  $\alpha_{k,K}$, which is valid for any set $K$ (with cardinality $k$) such that $J \subseteq K$.
If this upper bound is smaller than a lower bound of $\alpha_k$, TSA will not further explore any of $J$'s supersets, leading to reduced average-case computational complexity. For simplicity,  we will describe the TSA based on pick-$1$-element algorithm, simply called 1-Step TSA. However, we remark we can also extend the TSA to be based on pick-$l$-element ($l\geq 2$) algorithm, by calculating upper bounds of $\alpha_{k,K}$ based on the results of the pick-$l$-element algorithm.

\subsection{Tree structure}
A tree node $J$ represents an index subset of $\{1, ..., n\}$ such that $|J| \leq k$.  We have the following rule:
\begin{enumerate}
\item[\textbf{[R1]}] A parent node is a subset of each of its child node(s).
\end{enumerate}

A node that has no child is referred to as a \textit{leaf node}. We call the cardinality of the index set corresponding to $J$ as $J$'s \textit{height}.  The tree structure follows the ``legitimate order'', which ensures that any new index in the child node is bigger than the indices of its parent node.
\begin{enumerate}
\label{def:legitimate}
\item[\textbf{[R2]}]
``Legitimate order'' - Let $P$ and $C$ denote the parent node, and the child node. Then, any index in $P$ must be smaller than any index in $C \setminus P$.
\end{enumerate}
Fig. \ref{tree_draw} illustrates this rule in a tree with $k= 2$ and $n=3$.
\begin{figure}[h]
    \centering
    \includegraphics[scale=0.7]{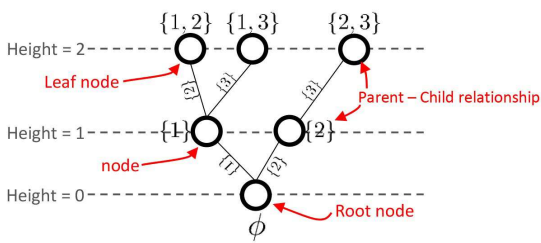}
    \caption{\small{A tree structure following the legitimate order for $k = 2$ and $n=3$.}}
    \label{tree_draw}
\end{figure}

\subsection{Basic idea of a branch-and-bound approach for calculating $\alpha_{k}$}
We use a branch-and-bound approach over the tree structure to calculate $\alpha_{k}$. This method maintains a lower bound on $\alpha_{k}$ (how to maintain this lower bound will be explained in
Subsection \ref{subsec:bestfirstimplementation}).
When the algorithm explores a tree node $J$,
the algorithm calculates an upper bound $B(J)$, which is no smaller
than $\alpha_{k,K}$ for any child node $K$ (with cardinality $k$) of node $J$.
If $B(J)$ is smaller than the lower bound on $\alpha_{k}$, then the algorithm will not explore the child nodes of the tree node $J$.

In our algorithm, we calculate $B(J)$ as
\par \noindent\small
\begin{align}
\label{eq:upperboundJ}
    B(J) = \alpha_{j,J} + {\sum_{i=1}^{t} \alpha_{1,\{i+max(J)\}}},
\end{align}
\normalsize
where $j+t=k$, $max(J)$ represents the largest index in $J$, and $\alpha_{1,\{1\}} \geq \alpha_{1,\{2\}} \geq ... \geq \alpha_{1,\{n\}}$. We obtain this descending order by permuting the columns of the sensing matrix $A$ in descending order of $\alpha_{1,\{i\}}$'s as the pre-computation step of TSA. For example, in Fig. \ref{tree_draw}, for $k=2$, $B(\{1\})=\alpha_{1,\{1\}} + \alpha_{1,\{2\}}$. In order to justify that $B(J)$ is an upper bound of $\alpha_{k,K}$ for all node $K$ such that $J \subseteq K$, we provide the following lemma.
\begin{lemma}
\label{lemma:eq:upperboundJ}
Given $\alpha_{1,\{1\}} \geq \alpha_{1,\{2\}} \geq ... \geq \alpha_{1,\{n\}}$, $B(J) = \alpha_{j,J} + \sum_{i=1}^{t} \alpha_{1,\{i+max(J)\}}$, where $j+t = k$, and $max(J)$ represents the largest index in $J$, is an upper bound of $\alpha_{k,K}$ for all nodes $K$ such that $J \subseteq K$. 
\end{lemma}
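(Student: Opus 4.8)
The plan is to fix an arbitrary node $K$ with $|K|=k$ and $J\subseteq K$, and to bound $\alpha_{k,K}$ from above by splitting the index set $K$ into $J$ together with the $t=k-j$ extra indices in $K\setminus J$. The first thing I would record is a structural consequence of the legitimate order rule [R2]: because $J$ is an ancestor of $K$, every index in $K\setminus J$ strictly exceeds $max(J)$, so $K\setminus J$ is a set of $t$ distinct integers contained in $\{max(J)+1,\dots,n\}$. This observation is what ties the abstract subset $K$ to the explicit indices appearing in $B(J)$ in (\ref{eq:upperboundJ}).

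Next I would unpack the definition (\ref{eq:alphalL}) of the proportion parameters. For any $z$ in the null space with $z\neq 0$, writing $K = J \cup (K\setminus J)$ gives the exact decomposition
\[
\frac{\|z_K\|_1}{\|z\|_1} = \frac{\|z_J\|_1}{\|z\|_1} + \sum_{s\in K\setminus J}\frac{|z_s|}{\|z\|_1}.
\]
Taking the maximum over $z$ and using the elementary fact that the maximum of a sum is at most the sum of the individual maxima, I obtain
\[
\alpha_{k,K} \le \alpha_{j,J} + \sum_{s\in K\setminus J}\alpha_{1,\{s\}},
\]
since each term on the right is, by definition, the maximum of the corresponding ratio taken separately over the null space. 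This subadditivity step is exactly where exactness is relinquished, but it preserves the desired inequality direction.

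Finally I would convert the sum over the unknown indices $K\setminus J$ into the explicit sum in $B(J)$. Here the hypothesis $\alpha_{1,\{1\}}\ge\alpha_{1,\{2\}}\ge\cdots\ge\alpha_{1,\{n\}}$ (guaranteed by the column-permutation preprocessing of TSA) is essential: among all choices of $t$ distinct indices lying above $max(J)$, the sum $\sum_{s}\alpha_{1,\{s\}}$ is largest when one selects the $t$ smallest eligible indices, namely $max(J)+1,\dots,max(J)+t$. Hence
\[
\sum_{s\in K\setminus J}\alpha_{1,\{s\}} \le \sum_{i=1}^{t}\alpha_{1,\{i+max(J)\}},
\]
and chaining the two inequalities yields $\alpha_{k,K}\le \alpha_{j,J}+\sum_{i=1}^{t}\alpha_{1,\{i+max(J)\}}=B(J)$, as claimed.

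I expect the main obstacle to be articulating this last, rearrangement-style inequality rigorously while making clear that it rests on two facts working in tandem: the legitimate order [R2], which confines $K\setminus J$ to indices greater than $max(J)$, and the descending sort of the $\alpha_{1,\{i\}}$, which makes the smallest such indices the heaviest. Everything else---the decomposition and the subadditivity of the maximum---is routine once the definitions in (\ref{eq:alphalL}) are unpacked.
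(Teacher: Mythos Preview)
Your proposal is correct and follows essentially the same route as the paper: split $K$ into $J$ and $K\setminus J$, use subadditivity of the maximum to bound $\alpha_{k,K}$ by $\alpha_{j,J}$ plus singleton contributions, then invoke the legitimate order \textbf{[R2]} together with the descending sort of the $\alpha_{1,\{i\}}$'s to dominate the singleton sum by $\sum_{i=1}^{t}\alpha_{1,\{i+max(J)\}}$. The only cosmetic difference is that the paper passes through the intermediate quantity $\alpha_{t,T}$ (appealing to Lemma~\ref{lemma:cheapupper}) before decomposing into singletons, whereas you decompose directly; the content is the same.
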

\begin{proof}
For any subset $K$ such that $J \subseteq K$, we can write $\alpha_{k,K} = \alpha_{j+t,\{J \cup T\}}$, where $j+t=k$ and $T = K \setminus J$. Then, following exactly the same line of argument as in the proof of Lemma \ref{lemma:cheapupper}, we have 
$$\alpha_{k,K} \leq \alpha_{j,J} + \alpha_{t,T},$$
and $\alpha_{t,T}$ is no larger than $\sum_{j \in T}^{t} \alpha_{1,\{j\}}$. Finally, since $\alpha_{1,\{i\}}$'s are sorted in the descending order, $\sum_{j \in T} \alpha_{1,\{j\}} \leq \sum_{i=1}^{t} \alpha_{1,\{i+max(J)\}}$. Note that, due to the legitimate order \textbf{[R2]}, the smallest element of the index set $T$ is no less than $1+max(J)$. In conclusion, for all nodes $K$ such that $J \subseteq K$, $B(J)$ becomes an upper bound of $\alpha_{k,K}$.
\qed
\end{proof}

\subsection{Best-first tree search strategy}
\label{subsec:bestfirstimplementation}
TSA adopts a best-first tree search strategy for the branch-and-bound approach. We first describe a basic version of the best-first tree search strategy, and then introduce two enhancements to this strategy in the next subsection.

In its basic version, TSA starts with a tree having only the root node, and sets the global lower bound of $\alpha_{k}$ as $0$. In each iteration, TSA selects a leaf tree node $J$ with the largest $B(J)$, and expands the tree by adding the child nodes of $J$ to the tree.  For each of these newly added child nodes, say $Q$,  TSA then calculates the upper bound $B(Q)$ in (\ref{eq:upperboundJ}).  Note that if a newly added child node $Q$ has $k$ elements, TSA will calculate $\alpha_{{k,Q}}$, \emph{which is a lower bound on $\alpha_k$}. For this $k$-element $Q$, if the newly calculated $\alpha_{{k,Q}}$ is bigger than the global lower bound of $\alpha_{k}$, TSA will set the global lower bound equal to $\alpha_{{k,Q}}$. TSA will terminate if a leaf tree node $J$ has the largest $B(J)$ among all the leaf nodes, and that $B(J)$ is no bigger than the global lower bound on $\alpha_k$. 

From standard theories of the branch-and-bound approach, this TSA will output the exact $\alpha_{k}$. Also, in this process, the global lower bound will keep increasing until it is equal to an upper bound of $\alpha_{k}$ (the largest $B(J)$ among leaf nodes).

%

\subsection{Two enhancements}
We incorporate two novel features to TSA in order to reduce the computational complexity. Firstly, when TSA attaches a new node $Q$ to a node $J$ in the tree structure, TSA computes $B(Q)$ as (\ref{eq:upperboundQ}):
\par\noindent\small
\begin{align}
\label{eq:upperboundQ}
    B(Q) = \alpha_{j,J} +\alpha_{1, Q\setminus J}+\sum_{i=1}^{t} \alpha_{1,\{i+max(Q)\}},
\end{align}
\normalsize
where $j+t+1=k$, $max(Q)$ represents the largest index in $Q$, and $\alpha_{1,\{1\}} \geq \alpha_{1,\{2\}} \geq ... \geq \alpha_{1,\{n\}}$. Thus, without calculating  $\alpha_{j+1, Q}$ (which involves higher computational complexity), we can still have $B(Q)$ as an upper bound of $\alpha_{k,K}$ for any child node $K$ (with cardinality $k$) of the node $Q$.

Secondly, when TSA adds a new node $Q$ as the child of node $J$ in the tree structure (assuming $\alpha_{j, J}$ has already been calculated), TSA does not need to add all of $J$'s child nodes to the tree at the same time.  Instead, TSA only adds the node $J$'s unattached child node $Q$ with the largest $B(Q)$ as defined in (\ref{eq:upperboundQ}). Namely, the index $Q\setminus J$ is no bigger than the index $Q'\setminus J$,  where $Q'$ is any unattached child of the node $J$. We note that $B(Q)$ is an upper bound on $B(Q')$ (according to (\ref{eq:upperboundQ})) for any other unattached child node $Q'$ of the node $J$.
Thus, for any child node $K$ (of cardinality $k$) of node $J$'s unattached child nodes, $B(Q)$ is still an upper bound of $\alpha_{{k,K}}$.

Algorithm \ref{TSA_algo} shows detailed steps of TSA, based on the pick-$1$-element algorithm (namely, $l=1$, $1$-Step TSA). In the  description, we define ``expanding the tree from a node $J$'' as follows:
\begin{enumerate}
\item[\textbf{[R3]}]``Expanding the tree from a node $J$'' - Attaching a new node $Q$ to the node $J$, where $B(Q)$ is the largest value defined as (\ref{eq:upperboundQ}) among the node $J$'s all the unattached child nodes.
\end{enumerate}

\subsection{Advantage of the tree search algorithm}
Due to the nature of the branch-and-bound approach, we can obtain a global upper bound and a global lower bound of $\alpha_k$ while TSA runs. As the number of iterations increases in TSA, we can obtain tighter and tighter upper bounds on $\alpha_{k}$, which is the largest $B(\cdot)$ among leaf nodes. By using the global upper bound of $\alpha_k$, we can obtain a lower bound of the recoverable sparsity $k$ via Lemma \ref{kmax_bound}. Thus, even if the complexity of TSA is too high to finish in a timely manner, we can still obtain a lower bound on the recoverable sparsity $k$ by early terminating TSA.

We note that the methods based on LP \cite{juditsky2011verifiable} and SDP \cite{d2011testing} also provide upper bounds on $\alpha_{k}$. However, they are unable to determine upper bounds of $\alpha_{k,K}$, which is for a specific index set $K$.
This prevents the use of LP and SDP methods in our branch-and-bound method for computing $\alpha_{k}$. %


\begin{algorithm}[t!]
\LinesNumbered
  \caption{Tree search algorithm based on the pick-$1$-element algorithm ($1$-Step TSA)}
  \label{TSA_algo}
  \SetAlgoLined
  \SetKwFor{Loop}{Loop}{}{EndLoop}
{\footnotesize
   \KwIn{ $A \in \mathbb{R}^{m \times n}$, $k$, $l \leftarrow 1$ \Comment{$1$-Step TSA, i.e., $l=1$} }
   \KwOut{ $\alpha_k$ }
   \BlankLine
   \nonl \textbf{$\triangleright \;\; \text{Pre-computation:}$ } \par
   compute $\alpha_{l,\{i\}}$ for $i=1,...,n$ via (\ref{opt_prob2}) \par
   permute columns of $A$ in descending order of $\alpha_{1,\{i\}}$'s, so that $\alpha_{1,\{1\}} \geq ...\geq \alpha_{1,\{n\}}$\par
   \BlankLine
   \nonl \textbf{$\triangleright \;\; \text{Tree expansion:}$ } \par
   start with root node $\varnothing$, where $B(\varnothing)=\sum_{i=1}^{k}\alpha_{1,\{i\}}$, in a tree structure $\Upsilon$\par
   \Loop{}{
        $J$ $\leftarrow$ a node that has the largest $B(\cdot)$ among all the leaf nodes in $\Upsilon$ \par
        $j$ $\leftarrow$ $|J|$  \par
        \uIf { $\alpha_{j,J}$ is not calculated }
        {
            compute $\alpha_{j,J}$ via (\ref{opt_prob2}) and update $B(J)$ via (\ref{eq:upperboundJ}) \par
            expand $\Upsilon$ from the parent of $J$  \Comment{See \textbf{[R3]}} \par
        }
        \Else
        {
            \uIf { $j = k$ }
            {
                $\alpha_k$ $\leftarrow$ $B(J)$ \par
                break \par
            }
            \Else
            {
                expand $\Upsilon$ from $J$ \Comment{See \textbf{[R3]}}\par
            }
        }
   }
}%
\end{algorithm}

\section{Numerical Experiments}
\label{sec_result}
We conducted extensive simulations to compute $\alpha_k$ and its upper/lower bounds using the pick-$l$ algorithms and TSA. In this section, we call the pick-$l$ algorithms introduced in Section \ref{Sec1} and \ref{Sec2} as simply the (basic) pick-$l$ and the optimized pick-$l$ algorithms respectively.

For same matrices, we compared our methods with LP relaxation \cite{juditsky2011verifiable} approach and SDP method \cite{d2011testing}. We assessed the computational complexity in terms of execution time of the algorithms.\footnote{We conducted our experiments on HP Z220 CMT with Intel Core i7-3770 dual core CPU @3.4GHz clock speed and 16GB DDR3 RAM, using Matlab (R2013b) on Windows 7.}
In addition, we carried out numerical experiments to demonstrate the computational complexity of TSA empirically.

For LP method in \cite{juditsky2011verifiable} and SDP method in \cite{d2011testing}, we used the Matlab codes\footnote{LP method from \url{http://www2.isye.gatech.edu/~nemirovs/} and SDP method from \url{http://www.di.ens.fr/~aspremon/NSPcode.html}.} provided by the authors. Consistent with previous research, we used CVX \cite{cvx} - a package for specifying and solving convex programs - for the SDP method, and MOSEK \cite{mosek} - a commercial LP solver - for the LP method. In our own algorithms, we used MOSEK to solve (\ref{opt_prob2}). Also, to be consistent with the previous research, matrices were generated from the Matlab code provided by the authors of \cite{d2011testing} at \url{http://www.di.ens.fr/~aspremon/NSPcode.html}. For valid bounds, we rounded down lower bounds on $\alpha_k$ and exact $\alpha_k$, and rounded up upper bounds on $\alpha_k$ to the nearest hundredth.

\subsection{Performance comparison}
Firstly, we considered Gaussian matrices and partial Fourier matrices sized from $n=40$ to $n=6144$. We chose $n=40$ so that our results can be compared with the simulation results in \cite{d2011testing}.

\subsubsection{Low-dimensional sensing matrices}
\underline{\textbf{Sensing matrices with $n=40$:}} We considered sensing matrices of row dimension $m=0.5n$, $0.6n$, $0.7n$, $0.8n$, where $n=40$. For every matrix size, we randomly generated $10$ different realizations of Gaussian and partial Fourier matrices. So in total we used 80 different $n=40$ sensing matrices for the numerical experiments in Tables \ref{tbl:Gaussian} and \ref{tbl:Fourier}. We normalized all of the matrix columns so that they have a unit $\ell_2$-norm. The entries of Gaussian matrices were i.i.d standard Gaussian $\mathcal{N}(0,1)$. The partial Fourier matrices had $m$ rows randomly draw from the full Fourier matrices. We compared our algorithms - pick-$1$-element, pick-$2$-element, pick-$3$-element and TSA - to LP and SDP methods. For readability, we place the numerical results for these small sensing matrices in Appendix B.

For each matrix size and type, we increased $k$ from $1$ to $5$ in unit steps. Tables \ref{tbl:Gaussian} (a) and \ref{tbl:Fourier} (a) show the \textit{median} values of $\alpha_k$. (To be consistent with the previous research \cite{d2011testing}, in which the authors used the median value of $\alpha_k$ to compare the SDP method with the LP method, we provided the median values obtained from 10 random realizations of sensing matrix.) From the median value of $\alpha_k$, we obtained the recoverable sparsity $k_{max}$ such that $\alpha_{k_{max}}< 1/2$ and $\alpha_{k_{max}+1} > 1/2$. In addition, we calculated the arithmetic mean of $k_{max}$'s. For the arithmetic mean, we obtained each $k_{max}$ from  each random realization, and computed the arithmetic mean of ten $k_{max}$'s. Compared with LP and SDP methods, we obtained bigger or at least the same recoverable sparsity $k_{max}$ by using pick-$2$, pick-$3$ and TSA. It is noteworthy that we obtained the exact $\alpha_k$ for $k=1,2,...,5$ by using TSA, while LP and SDP methods only provided the exact $\alpha_k$ for $k=1$. We observed that $\alpha_k < 1/2$ but the upper bound of $\alpha_k > 1/2$ holds true in several cases, e.g.,  $\alpha_5$ in $32 \times 40$ Gaussian matrices, $\alpha_4$ in $28 \times 40$ Gaussian matrices, $\alpha_3$ in $24 \times 40$ Gaussian matrices, $\alpha_3$ in $20 \times 40$ partial Fourier matrices, and $\alpha_4$ in $24 \times 40$ partial Fourier matrices. Additionally, this can also be established by the arithmetic mean of $k_{max}$ in Tables \ref{tbl:Gaussian} (a) and \ref{tbl:Fourier} (a). 

To compare the computational complexity, we calculated the \textit{geometric mean} of the algorithms' execution time, to avoid biases for the average. Tables \ref{tbl:Gaussian} (b) and \ref{tbl:Fourier} (b) list the average execution time. We also ran the Exhaustive Search Method (ESM) to find $\alpha_k$,  and compared its execution time with that of TSA. In calculating $\alpha_5$, on average, 3-Step TSA reduced the computational time by around 86 times for $20 \times 40$ Gaussian matrices, and by 94 times for $20 \times 40$ partial Fourier matrices, compared to ESM. For $32 \times 40$ Gaussian matrix and partial Fourier matrix, the speedup compared to the best $l$-Step TSA, $l=1,2,3$, becomes around 1760 times and 182 times respectively. We observed that when $m/n = 0.5$ e.g. $20 \times 40$ sensing matrices, in general, the $3$-step TSA provides the fastest result for $k = 5$. On the other hand, for $m/n = 0.8$ (e.g. $32 \times 40$ case), the $2$-Step TSA is the quickest in finding an exact $\alpha_k$ for $k = 5$; however, for $k > 5$, the fastest $l$-step TSA cannot be determined from either experiments or theory.

\underline{\textbf{Sensing matrices with $n=256$:}} We assessed the performance of the pick-$l$ algorithm for sensing matrices with $n=256$. We carried out numerical experiments on $128 \times 256$ Gaussian matrices in Fig. \ref{fig:pickVSlp} (a) and $64 \times 256$ partial Fourier matrices in Fig. \ref{fig:pickVSlp} (b). Here, for $10$ sensing matrices, we obtained the median value of upper bounds of $\alpha_k$ using the pick-$l$ algorithm and compared the result with LP relaxation method \cite{juditsky2011verifiable}. We omitted SDP method \cite{d2011testing} from this experiment due to its very high computational complexity. For the pick-$3$ algorithm in Fig. \ref{fig:pickVSlp} (a), we calculated an upper bound of $\alpha_3$ via TSA, and used this result to calculate upper bounds of $\alpha_k$, $k =3,4, ...,8$ via (\ref{recoverable_sparsity}). Fig. \ref{fig:pickVSlp} (a) and (b) demonstrate that, with an appropriate choice of $l$, the upper bound of $\alpha_k$ obtained via the pick-$l$ algorithm can be tighter than that from the LP relaxation method. For example, for $128 \times 256$ Gaussian matrices, LP relaxation often determines the maximum recoverable sparsity as $5$, while the pick-$2$ algorithm improves it to $6$. In the pick-$3$ algorithm, the maximum recoverable sparsity is $7$ ($\alpha_7=0.49$). For $64 \times 256$ partial Fourier matrices, the maximum recoverable sparsity from LP relaxation and the pick-$2$ algorithm are 3 and 4 respectively.
\begin{figure}[t]
    \centering
    \subfloat[$128 \times 256$ Gaussian matrices]{ \includegraphics[scale=0.22]{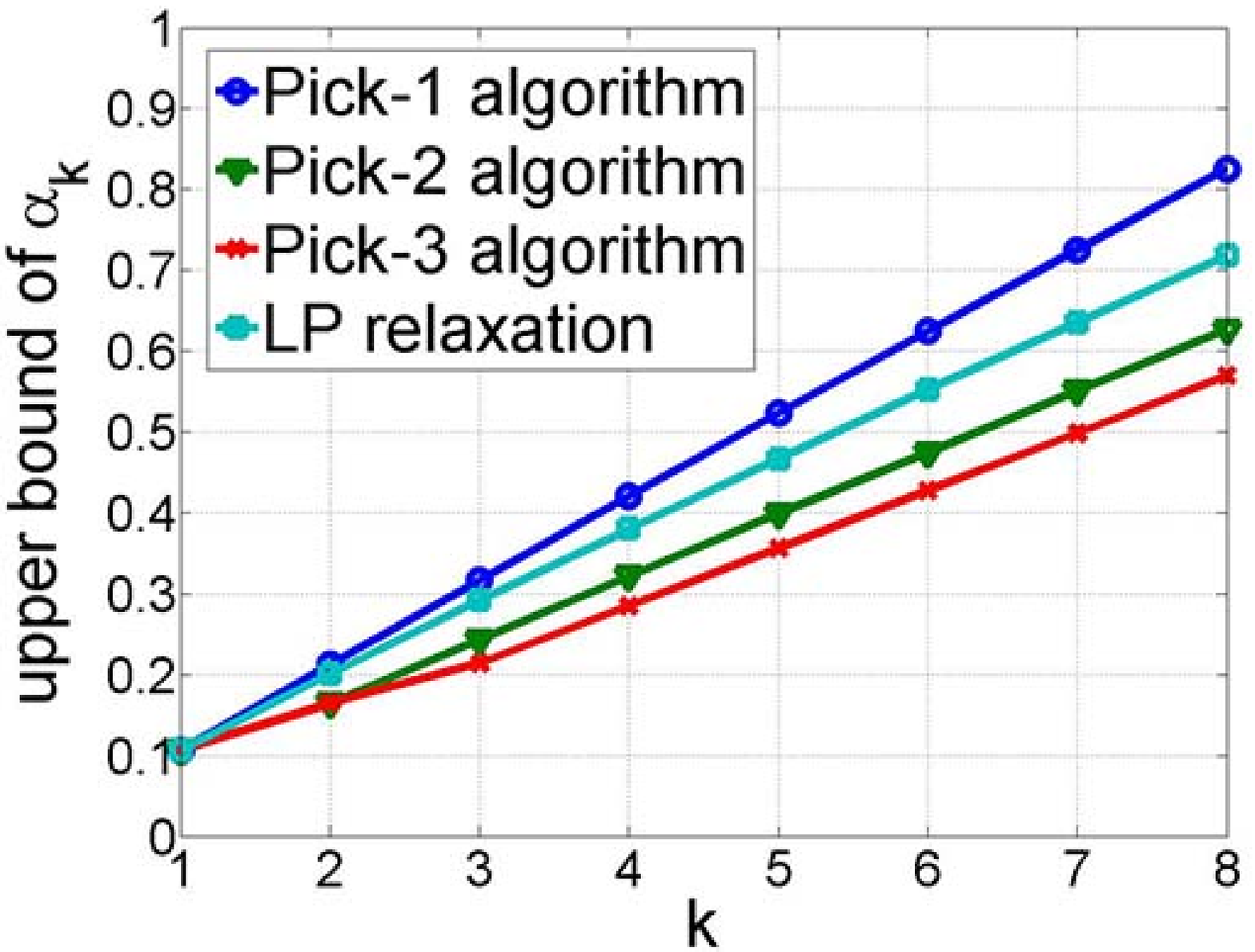}}
    \subfloat[$64 \times 256$ Partial Fourier matrices]{ \includegraphics[scale=0.22]{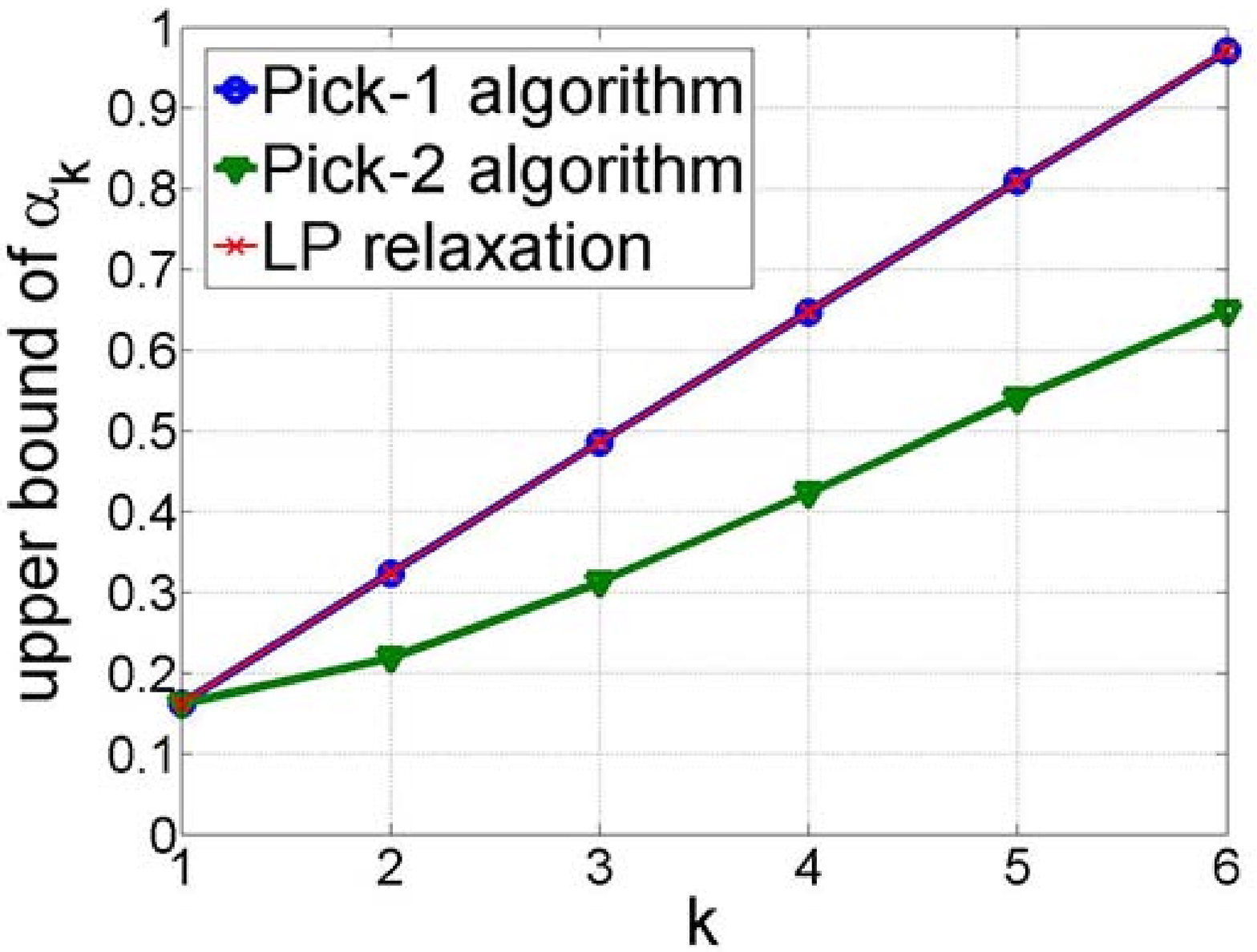}}
    \caption{\small{Median upper bounds of $\alpha_k$ from the pick-$l$ algorithm and the LP relaxation method.}}
    \label{fig:pickVSlp}
\end{figure}

\underline{\textbf{Sensing matrices with $n=512$:}} We further conducted numerical experiments on Gaussian sensing matrices with $n=512$. The simulation results in Table \ref{tbl:Gaussian_LD_512} clearly demonstrate that the pick-2 algorithm provides larger lower bound on the recoverable sparsity $k$ than the LP method \cite{juditsky2011verifiable}. Especially, when Gaussian sensing matrix is $410 \times 512$, the lower bound on $k$ obtained from the pick-2 algorithm is almost twice larger than that of the LP method. 

\begin{table}[t!]
\centering
\caption{\label{tbl:Gaussian_LD_512} Lower bound on $k$ and execution time (Gaussian Matrix with $n=512$)}
\subfloat[Lower bound on $k$]{
\begin{threeparttable}
\setlength{\tabcolsep}{15.0pt}
{\small
    \begin{tabular}{cccc}
        \hline
        $ \text{matrix}\; A$  & Pick-$1$ & Pick-$2$ & LP\tnote{a} \\
        \hline
        $102 \times 512$  & 2   & 3    & 2    \\
        $205 \times 512$  & 5   & 7    & 5    \\
        $307 \times 512$  & 10  & 17   & 10   \\
        $410 \times 512$  & 14  & 27   & 14   \\
        \hline
    \end{tabular}
}
\end{threeparttable}
}\\
\subfloat[Execution time (Unit: second)]{
\centering
\setlength{\tabcolsep}{13.0pt}
\begin{threeparttable}
{\small
    \begin{tabular}{cccc}
        \hline
        $ \text{matrix}\; A $  & Pick-$1$ & Pick-$2$ &
        LP\tnote{a} \\
        \hline
        $102 \times 512$    & 53.7   & 2.96e4  & 50.8    \\
        $205 \times 512$    & 114.8  & 6.36e4  & 105.1   \\
        $307 \times 512$    & 309.7  & 1.19e5  & 333.0   \\
        $410 \times 512$    & 133.1  & 5.03e4  & 510.0   \\
        \hline
    \end{tabular}
}
\end{threeparttable}
}\\[-15pt]
\begin{threeparttable}
\begin{tabular*}{\textwidth}{c}
\end{tabular*}
\end{threeparttable}
\begin{tablenotes}[flushleft]
    \item [a] {\scriptsize Linear Programming \cite{juditsky2011verifiable}} \\
\end{tablenotes}
\end{table}

\subsubsection{High-dimensional sensing matrices}
\underline{\textbf{Sensing matrix with $n \geq 1024$:}} We conducted numerical experiments for Gaussian sensing matrices with $n$ from $1024$ to $6144$. We show these numerical experiments in Tables \ref{tbl:Gaussian_HD} and \ref{tbl:Gaussian_HD2}, where we calculated the lower bound on the recoverable sparsity $k$ and obtained the corresponding execution time. The SDP method \cite{d2011testing} was not applicable in these experiments due to its very high computational complexity. In Table \ref{tbl:Gaussian_HD}, we ran TSA for 1 day (24 hours) and obtained an upper bound of $\alpha_2$, denoted by $UB(\alpha_2)$. With the upper bound of $\alpha_2$, we obtained a lower bound of $k$, denoted by $k(UB(\alpha_2))$, via Lemma \ref{kmax_bound}. Our numerical results in Tables \ref{tbl:Gaussian_HD} and \ref{tbl:Gaussian_HD2} clearly show that our pick-$l$ algorithm outperforms the LP method in recoverable sparsity $k$ or execution time. We note that although our pick-$1$-element algorithm provides the same recoverable sparsity $k$ as the LP method \cite{juditsky2011verifiable} in Tables \ref{tbl:Gaussian_HD} and \ref{tbl:Gaussian_HD2}, the complexity of LP method can be $10$ times higher than our method on $m \times n$ Gaussian matrices, where $m$ is large.

\begin{table}[t!]
\centering
\caption{\label{tbl:Gaussian_HD} Lower bound on $k$ and execution time (Gaussian Matrix with $n=1024$)}
\subfloat[Lower bound on $k$]{
\begin{threeparttable}
\small{
    \begin{tabular*}{0.45\textwidth}{@{\extracolsep{\fill}}cccccc@{}}
        \hline
        $ \text{matrix}\; A$  & Pick-$1$ & $k(UB(\alpha_2)\tnote{b})$ & $k(\alpha_1)$ & LP\tnote{a} \\
        \hline
        $102 \times 1024$  & 2   & 3   & 2   & 2 \\
        $205 \times 1024$  & 4   & 4   & 4   & 4 \\
        $307 \times 1024$  & 5   & 6   & 5   & 5 \\
        $410 \times 1024$  & 7   & 8   & 7   & 7 \\
        $512 \times 1024$  & 9   & 10  & 9   & 9 \\
        $614 \times 1024$  & 12  & 13  & 12  & 12 \\
        $717 \times 1024$  & 16  & 17  & 15  & 16 \\
        $819 \times 1024$  & 21  & 23  & 20  & 21 \\
        $922 \times 1024$  & 32  & 36  & 30  & 32 \\
        \hline
    \end{tabular*}
}
\end{threeparttable}
}\\
\subfloat[Execution time (Unit: second)]{
\centering
\begin{threeparttable}
\small{
    \begin{tabular*}{0.45\textwidth}{@{\extracolsep{\fill}}ccccc@{}}
        \hline
        $ \text{matrix}\; A $  & Pick-$1$ & $k(UB(\alpha_2)\tnote{b})$ & $k(\alpha_1)$ &
        LP\tnote{a} \\
        \hline
        $102 \times 1024$    & 237   & 24 hours  & 237  & 200 \\
        $205 \times 1024$    & 452   & 24 hours  & 452   & 429  \\
        $307 \times 1024$    & 796   & 24 hours  & 796   & 723  \\
        $410 \times 1024$    & 1207  & 24 hours  & 1207  & 1073 \\
        $512 \times 1024$    & 1952  & 24 hours  & 1952  & 1600 \\
        $614 \times 1024$    & 2150  & 24 hours  & 2150  & 2217 \\
        $717 \times 1024$    & 1337  & 24 hours  & 1337  & 2992 \\
        $819 \times 1024$    & 838   & 24 hours  & 838   & 3904 \\
        $922 \times 1024$    & 386   & 24 hours  & 386   & 4730 \\
        \hline
    \end{tabular*}
}
\end{threeparttable}
}\\[-15pt]
\begin{threeparttable}
\begin{tabular*}{\textwidth}{c}
\end{tabular*}
\end{threeparttable}
\begin{tablenotes}[flushleft]
    \item [a] {\scriptsize Linear Programming \cite{juditsky2011verifiable}} \\
    \item [b] {\scriptsize Upper bound of $\alpha_2$ obtained from $1$-Step TSA after 24 hours' run}\\
\end{tablenotes}
\end{table}

\begin{table}[t!]
\centering
\caption{\label{tbl:Gaussian_HD2} Lower bound on $k$ and execution time (Gaussian Matrix)}
\subfloat[Lower bound on $k$]{
\begin{threeparttable}
{\small
    \begin{tabular*}{0.45\textwidth}{@{\extracolsep{\fill}}cccccc@{}}
        \hline
        $\text{matrix}\;A$  & Pick-$1$ & $k(\alpha_1)$ &
        LP\tnote{a}   \\
        \hline
        $512  \times 2048$   & 7   & 6    & 7 \\
        $2007 \times 2048$   & 102 & 90  & 102 \\
        $4014 \times 4096$   & 152 & 139 & N/A\tnote{b}\\
        $1024 \times 6144$   & 8   & 8    & 8   \\
        $6021 \times 6144$   & 190 & 174 & N/A \\
        $6134 \times 6144$   & 558 & 406 & N/A \\
        \hline
    \end{tabular*}
}
\end{threeparttable}
}\;\;
\subfloat[Execution time (Unit: second)]{
\begin{threeparttable}
{\small
    \begin{tabular*}{0.45\textwidth}{@{\extracolsep{\fill}}cccccc@{}}
        \hline
        $\text{matrix}\;A$  & Pick-$1$    & $k(\alpha_1)$
        & LP    \\
        \hline
        $512  \times 2048$   & 7.51e3 & 7.51e3  & 6.63e3 \\
        $2007 \times 2048$   & 6.71e2 & 6.71e2  & 7.19e4 \\
        $4014 \times 4096$   & 9.12e3 & 9.12e3  & 15 days\tnote{c} \\
        $1024 \times 6144$   & 2.18e5 & 2.18e5  & 1.61e5     \\
        $6021 \times 6144$   & 3.89e4 & 3.89e4  & 65.5 days\tnote{d} \\
        $6134 \times 6144$   & 1.37e4 & 1.37e4  & 41.7 days\tnote{e} \\
        \hline
    \end{tabular*}
}
\end{threeparttable}
}\\[-15pt]
\begin{threeparttable}
\begin{tabular*}{0.45\textwidth}{c}
\end{tabular*}
\end{threeparttable}
\begin{tablenotes}
    \item [a] {\scriptsize Linear Programming \cite{juditsky2011verifiable}}
    \item [b] {\scriptsize Not Available} \\
    \item [c] {\scriptsize Estimated time (15 hours for 4\% calculations) } \par
    \item [d] {\scriptsize Estimated time (15 hours for 1\% calculations) } \par
    \item [e] {\scriptsize Estimated time (10 hours for 1\% calculations) } \par
\end{tablenotes}
\end{table}

For extremely large sensing matrices, e.g, $4014 \times 4096$ and $6021 \times 6144$, the LP and SDP methods cannot provide any lower bound on $k$ due to unreasonable computational time. However, our pick-$l$ algorithm can still provide the lower bound on $k$ efficiently. Table \ref{tbl:Gaussian_HD2} shows the lower bound on $k$ and the execution time for these large dimensional matrices, where our verified recoverable sparsity $k$ can be as large as $558$ for a $6134 \times 6144$ sensing matrix. We obtained the estimated time for the LP method by running the Matlab code obtained from \url{http://www2.isye.gatech.edu/~nemirovs/}, which shows the percentage of the calculation on screen.

\subsection{Comparison between the optimized pick-$l$ algorithm and the basic pick-$l$ algorithm} 
We compared the basic pick-$l$ algorithm introduced in Section  \ref{Sec1} to the optimized pick-$l$ algorithm in Section \ref{Sec2} on Gaussian sensing matrices $28 \times 40$ and $40 \times 50$ for $l=3$ and $k=4,5,...,8$. Table \ref{tbl:Gaussian_LD_256} demonstrates that when $l=3$ and $k=4,5,...,8$, the optimized pick-$l$ algorithm provided tighter upper bounds on $\alpha_k$ than the basic pick-$l$ algorithm. This is because when $l$ is large and $k>l$, (15) includes more constraints, which leads to the reduced size of the feasible set, than the case when $k$ and $l$ are small. Hence, the optimal value of  (\ref{pickl_optimized_coefficients}), which is the result from the optimized pick-$l$, can be smaller than or equal to that of (\ref{optimized_coefficients_without_constraint}), which is the basic pick-$l$. Additionally, we provided the exact $\alpha_k$ values obtained from TSA in order to check how tight the bounds obtained from the basic pick-$l$ and the optimized pick-$l$ are. In terms of the execution time, the optimized pick-$l$ algorithm, which computes  (\ref{pickl_optimized_coefficients}), was around 1.7 and 4.4 times slower than the basic pick-$l$ on $28 \times 40$ and $40 \times 50$ Gaussian matrix respectively. \\
In summary, the optimized pick-$l$ algorithm provides better or at least equal upper bound on $\alpha_k$ to the basic pick-$l$ algorithm, with additional complexity. In spite of the increased complexity of the optimized pick-$l$ algorithm, it has an important theoretical merit, which is Lemma \ref{lemma3_various_coefficients}.

\begin{table}[t!]
\centering
\caption{\label{tbl:Gaussian_LD_256} $\alpha_k$ comparison and execution time (Gaussian Matrix)}
\subfloat[$\alpha_k$ comparison]{
\begin{threeparttable}
\setlength{\tabcolsep}{3.8pt}
{\scriptsize
    \begin{tabular}{ccccccc}
        \hline
        matrix $A$                  & Algo. & $\alpha_4$  & $\alpha_5$    & $\alpha_6$   & $\alpha_7$   & $\alpha_8$   \\
        \hline
        \multirow{3}{*}{28 $\times$ 40}    
        						   			 & Basic pick-$3$  		& 0.52  & 0.64  & 0.75  & 0.86   &  0.97       \\
        						   			 & Optimized pick-$3$  & 0.52  & 0.63  & 0.75  & 0.85   &  0.96    \\  
        						   			 & 3-Step TSA 				& 0.47  & 0.54  & 0.62  & 0.67   & 0.72-0.78 \\
         						    \hline       						   			 						   			 
        \multirow{3}{*}{40 $\times$ 50}    
        						   			 & Basic pick-$3$  		& 0.40  & 0.48  & 0.57  & 0.65   &  0.72       \\
        						   			 & Optimized pick-$3$  & 0.39  & 0.47  & 0.55  & 0.62   &  0.70    \\   
        						   			 & 3-Step TSA 				& 0.36  & 0.41  & 0.46  &  0.51   &  0.57-0.59 \\
        						    \hline         						    
    \end{tabular}
}
\end{threeparttable}
}\\
\subfloat[Execution time (Unit: second)]{
\centering
\setlength{\tabcolsep}{2.0pt}
\begin{threeparttable}
{\scriptsize
    \begin{tabular}{ccccccc}
        \hline
        matrix $A$                  & Algo. & $\alpha_4$  & $\alpha_5$    & $\alpha_6$   & $\alpha_7$   & $\alpha_8$   \\
        \hline
        \multirow{2}{*}{28 $\times$ 40}    
        						   			 & Basic pick-$3$  		& 249.28  & 249.28  & 249.28  & 249.28   & 249.28        \\
        						   			 & Optimized pick-$3$  & 420.97  & 410.43  & 422.14  & 422.41   & 460.52     \\     
         						    \hline       						   			 
        \multirow{2}{*}{40 $\times$ 50}    
        						   			 & Basic pick-$3$  		&  748.88 & 748.88  & 748.88  & 748.88   & 748.88        \\
        						   			 & Optimized pick-$3$  & 3.31e3  & 3.49e3  & 3.26e3  & 3.26e3   & 3.31e3     \\   
        						    \hline 
    \end{tabular}
}
\end{threeparttable}
}
\end{table}

\subsection{Complexity of tree search algorithm}
In this subsection, we carried out numerical experiments to demonstrate the computational complexity of TSA empirically on randomly chosen Gaussian sensing matrices. Fig. \ref{hist:SWA_TSA} (a) and (b) show the distribution of execution time and the distribution of number of nodes in height 5 attached to the tree structure in TSA respectively. For $m = 0.5n$, we generated 100 random realizations of Gaussian matrices and computed $\alpha_5$ using 3-Step TSA. The maximum number of leaf node whose cardinality is $k$ is $\binom{n}{k} = \binom{40}{5}=6.58008e5$. From Fig. \ref{hist:SWA_TSA} (b), we note that for 90 $\%$ of the cases, 3-Step TSA was terminated before 1.6 $\%$ of all the possible height-$5$ nodes were attached to the tree structure.

We provided the execution time of TSA for different-sized randomly chosen Gaussian matrices in Fig. \ref{fig:k_n_curve_n_increase}. We compared the execution time of TSA to ESM. Fig. \ref{fig:k_n_curve_n_increase} (a) shows that when $k=1$, 1-Step TSA provides almost similar performance to ESM. This is because 1-Step TSA calculates all the $\alpha_{1,\{i\}}$'s as a pre-computation, which is the same procedure as ESM. However, for $k>l$ as shown in Fig. \ref{fig:k_n_curve_n_increase} (b), (c), and (d), TSA can find $\alpha_k$ with reduced computation by using all the $\alpha_{l,L}$'s, while it is required to compute all the $\alpha_{k,K}$'s in ESM. In order to compute $\alpha_k$, we achieved a speedup of around 100 times via 2-Step TSA compared to ESM for $k=3,4$.

In addition, in Fig. \ref{fig:k_n_curve_k_increase},  we compared the execution time of TSA to ESM by varying $k$ with $n$ fixed on random Gaussian matrices. For the best execution time of TSA, we used different $l$ values for TSA. For $n=40$ and $n=50$, 3-Step TSA reduced the execution time to find $\alpha_5$ by around 100 times and 300 times respectively, compared with ESM .

\begin{figure}[t!]
    \centering
    \subfloat[]{ \includegraphics[scale=0.1]{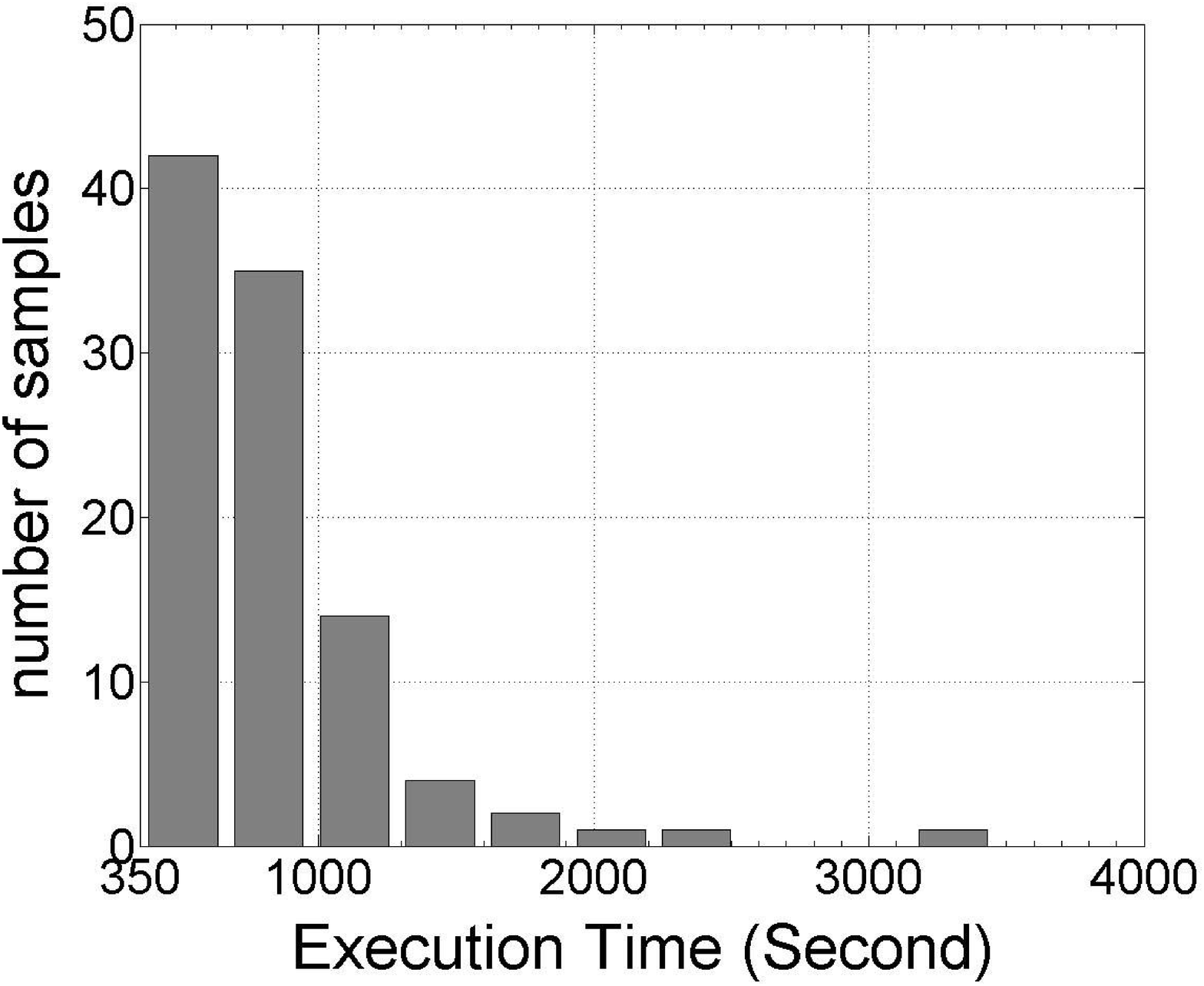} }
    \subfloat[]{ \includegraphics[scale=0.1]{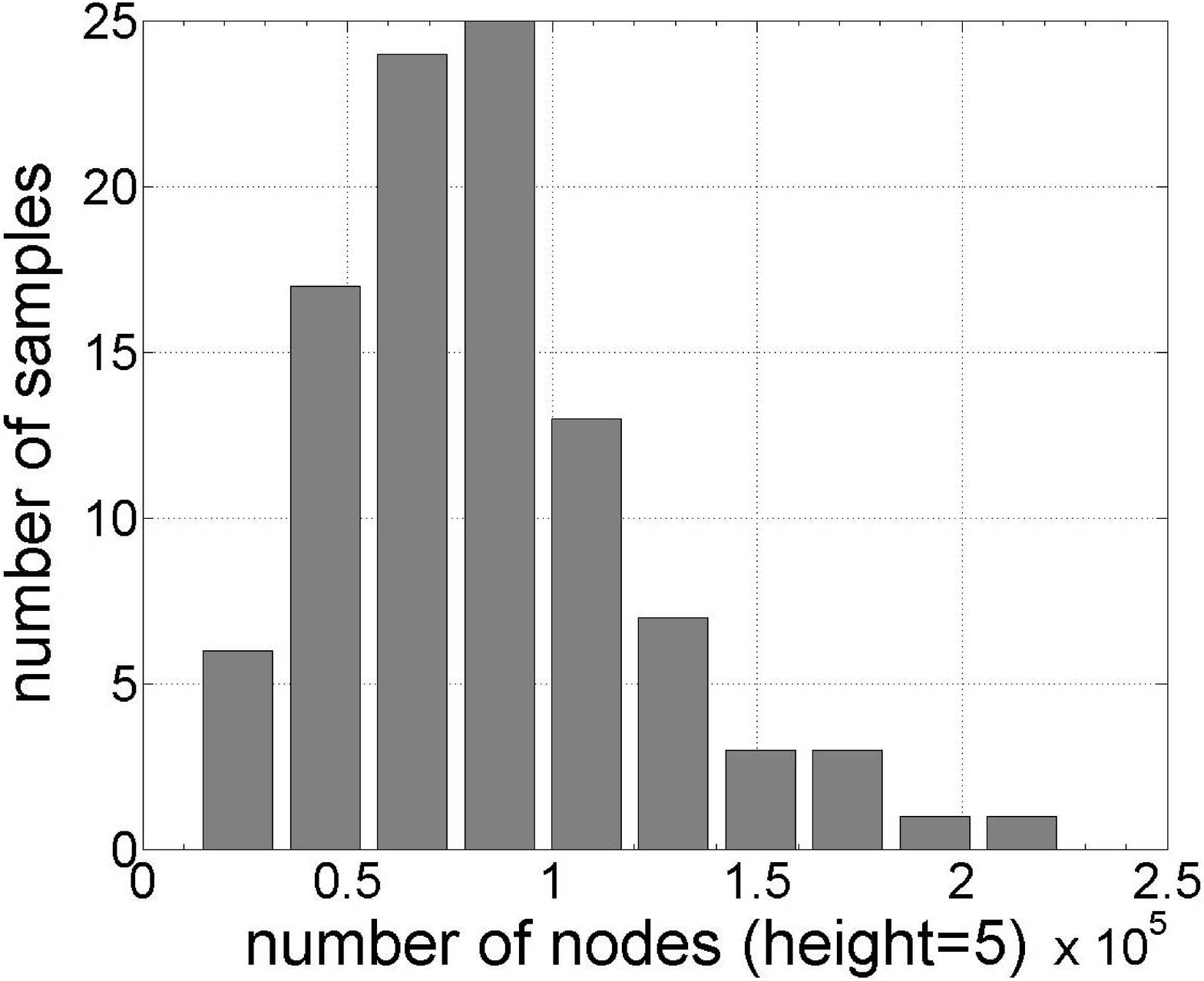} }
    \caption{\small{Histograms of the TSA (based on the pick-$3$ algorithm) to find $\alpha_5$ on 100 randomly chosen $20 \times 40$ Gaussian sensing matrices for each method. (a) Execution time. (b) Number of nodes in height 5. }}
    \label{hist:SWA_TSA}
\end{figure}
\begin{figure}[t!]
    \centering
    \subfloat[$k=1$]{ \includegraphics[scale=0.09]{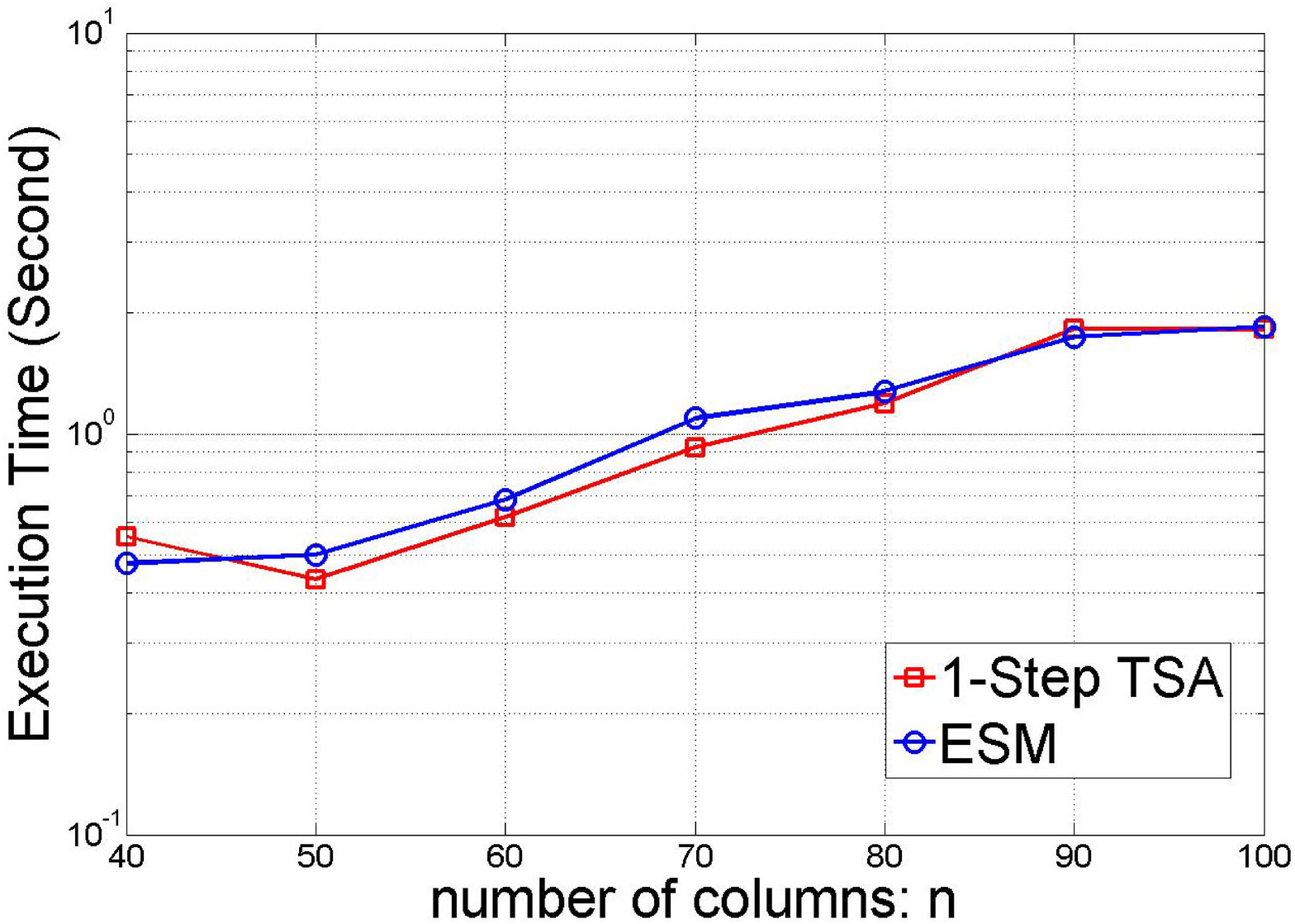} }
    \subfloat[$k=2$]{ \includegraphics[scale=0.09]{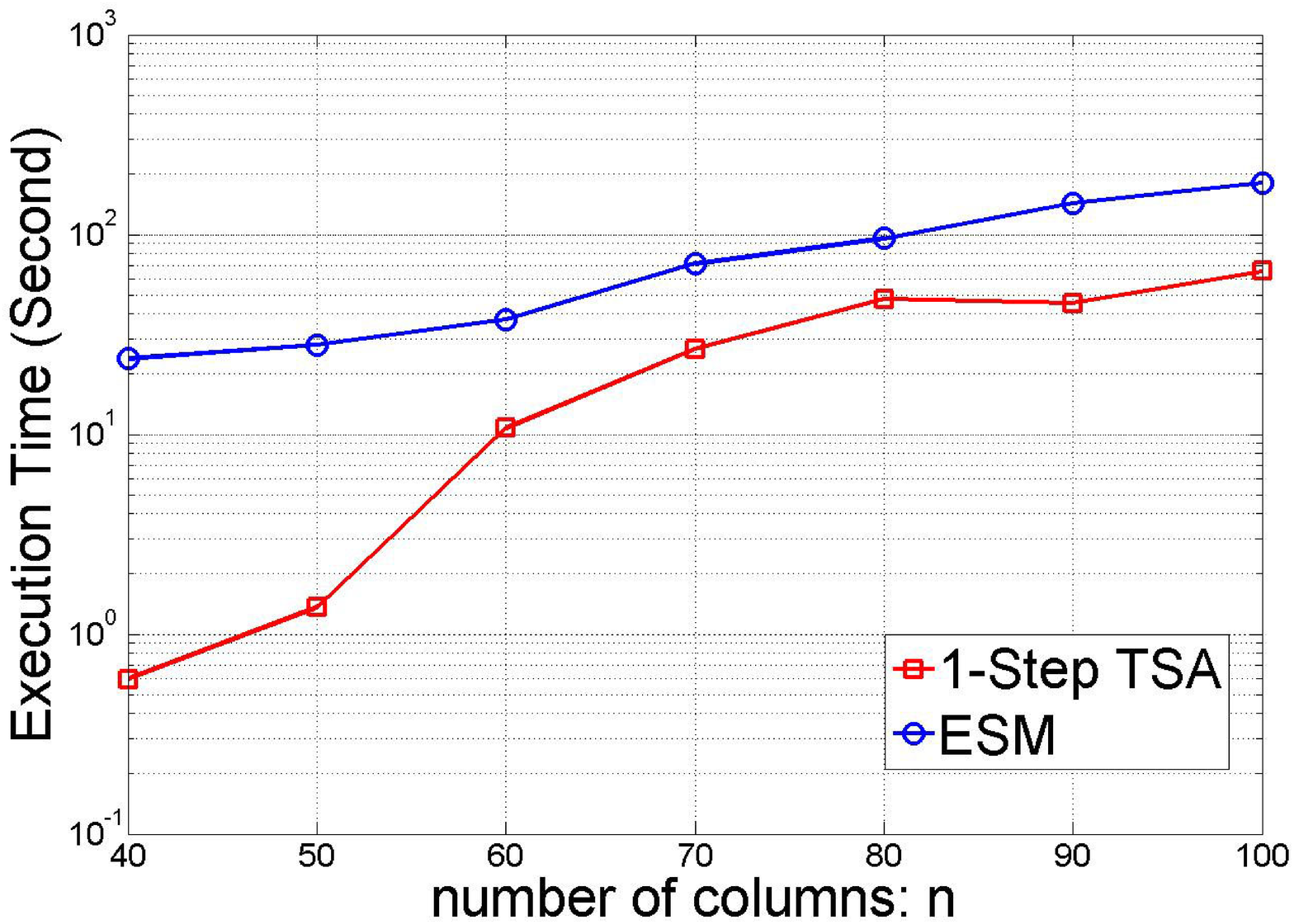} }\\
    \subfloat[$k=3$]{ \includegraphics[scale=0.09]{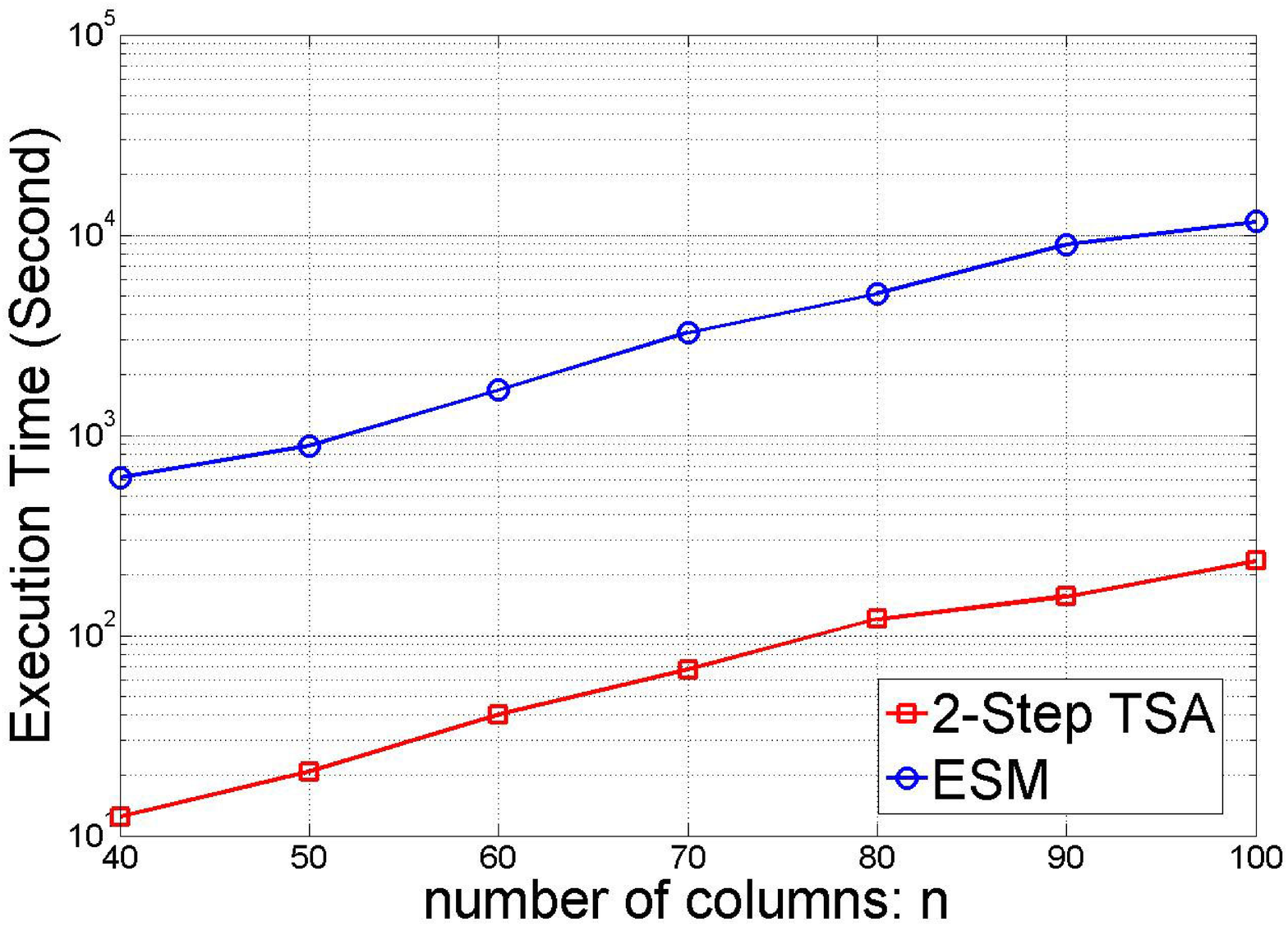} }
    \subfloat[$k=4$]{ \includegraphics[scale=0.09]{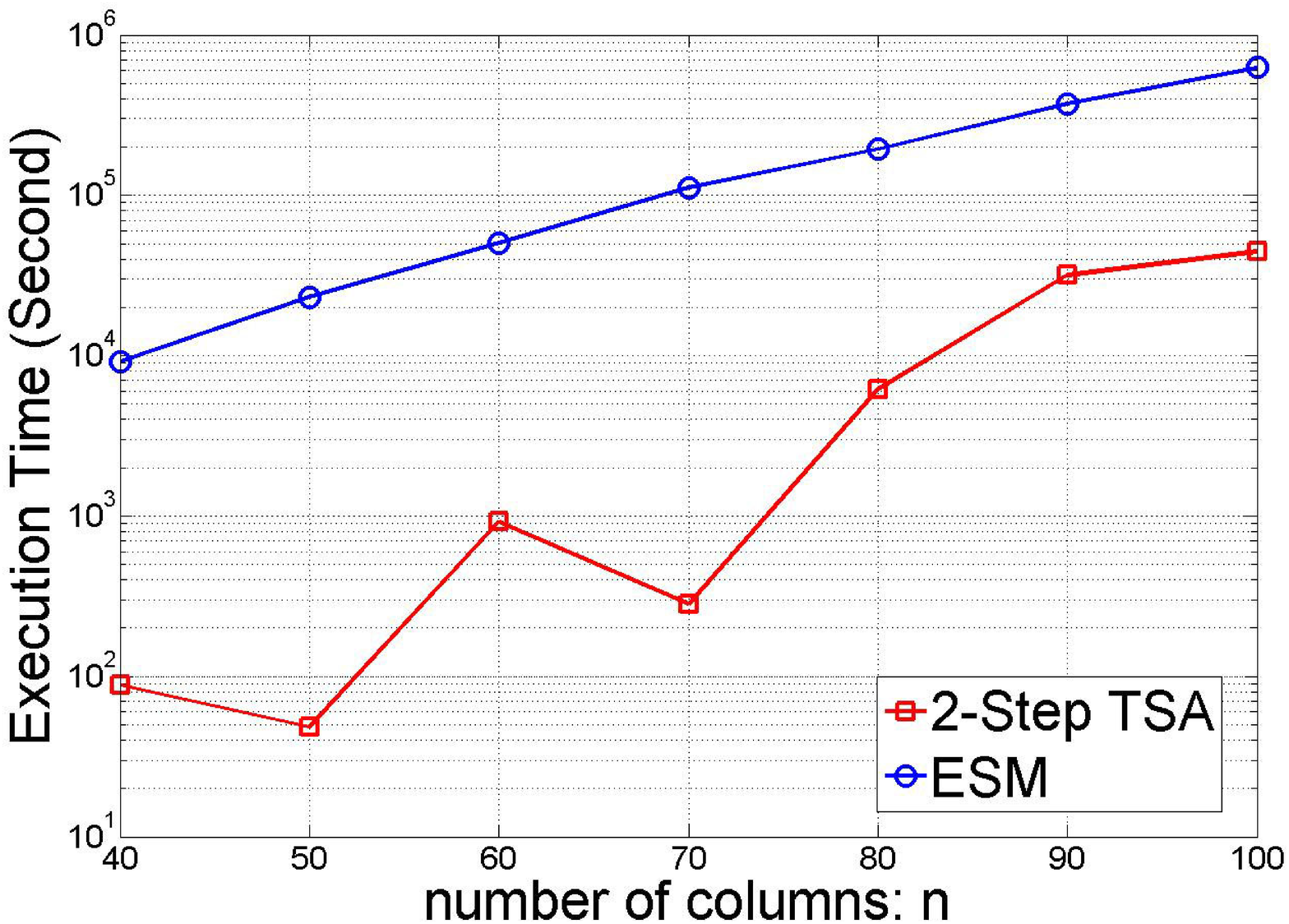} }
    \caption{\small{The execution time of TSA in log scale as a function of $n$ on randomly chosen $m \times n$ Gaussian matrices, where $m = n/2$.}}
    \label{fig:k_n_curve_n_increase}
\end{figure}
\begin{figure}[t!]
    \centering
    \subfloat[$n=40$]{ \includegraphics[scale=0.048]{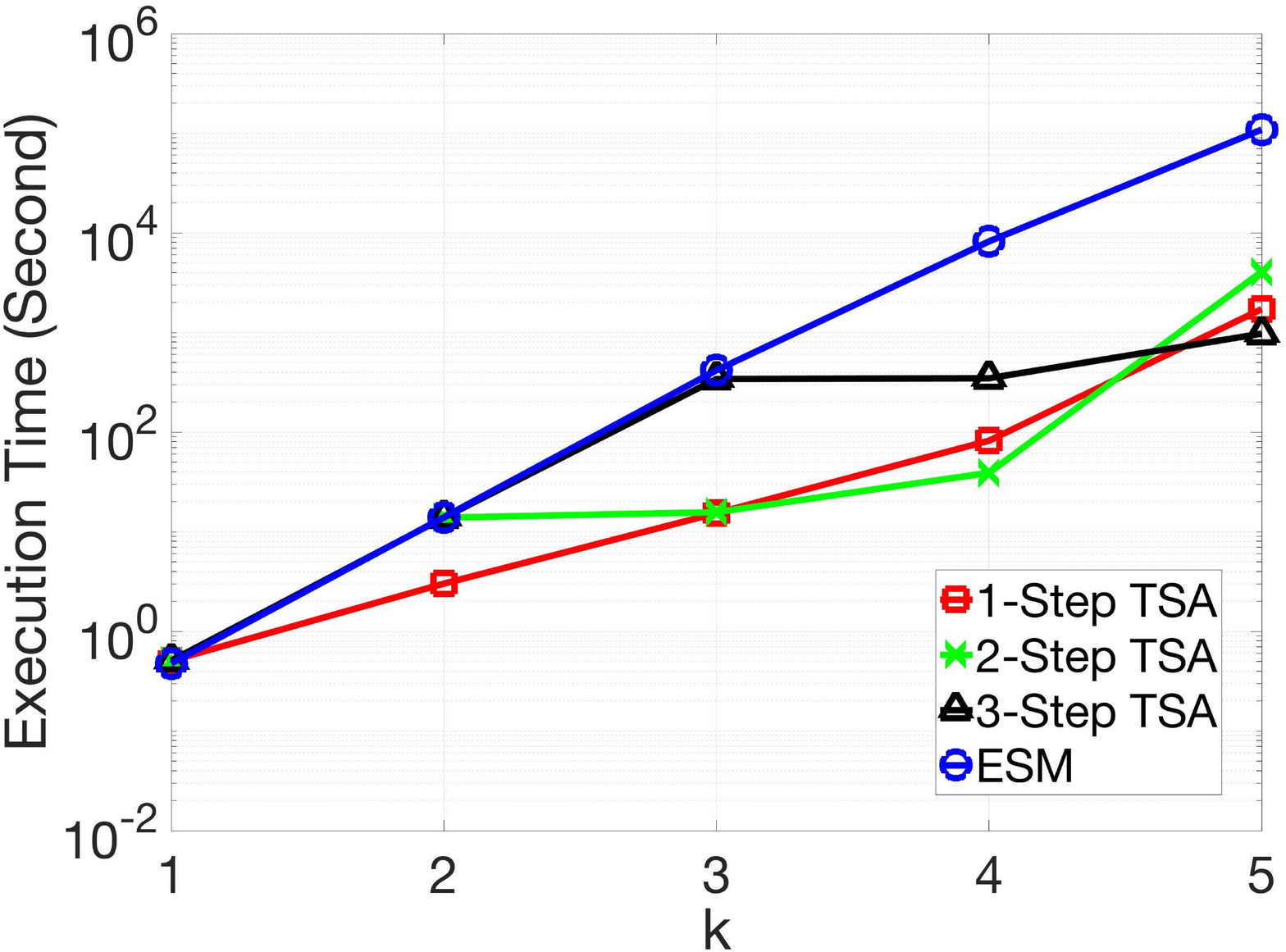} }
    \subfloat[$n=50$]{ \includegraphics[scale=0.048]{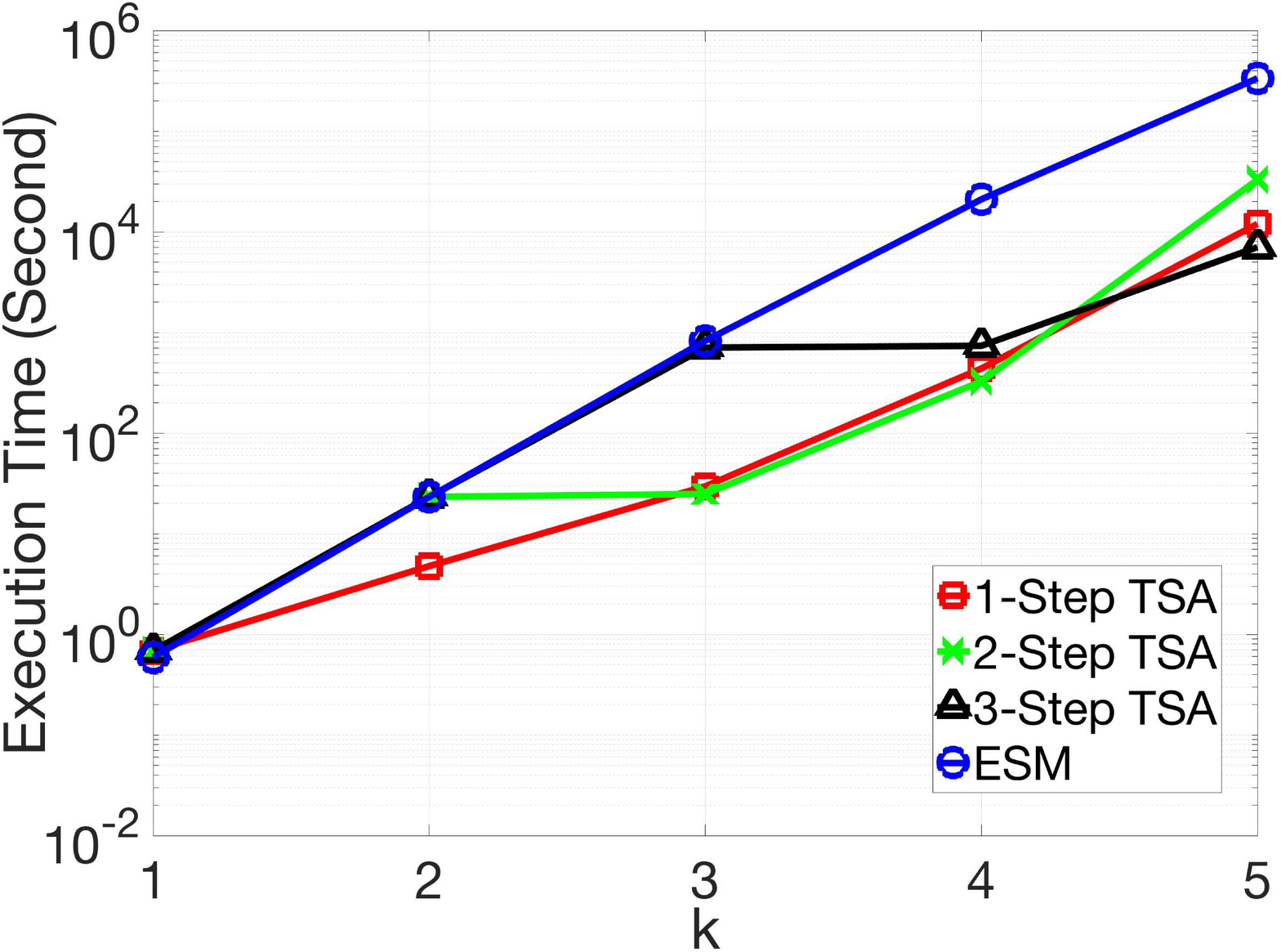} }
    \caption{\small{The execution time of TSA in log scale as a function of $k$ on randomly chosen $m \times n$ Gaussian matrices, where $m = n/2$.}}
    \label{fig:k_n_curve_k_increase}
\end{figure}

Finally, Fig. \ref{fig:TSA_GLB_GUB} gives illustrations of the values of the global lower and upper bounds, for $80 \times 100$ and $160 \times 200$ Gaussian sensing matrices, as the number of iterations in TSA increases. As we can see, the global upper and lower bounds get close very quickly. This implies that we can sometimes terminate TSA early and still obtain tight bounds on $\alpha_k$.

\begin{figure}[t!]
    \centering
    \subfloat[$80 \times 100$]{ \includegraphics[scale=0.22]{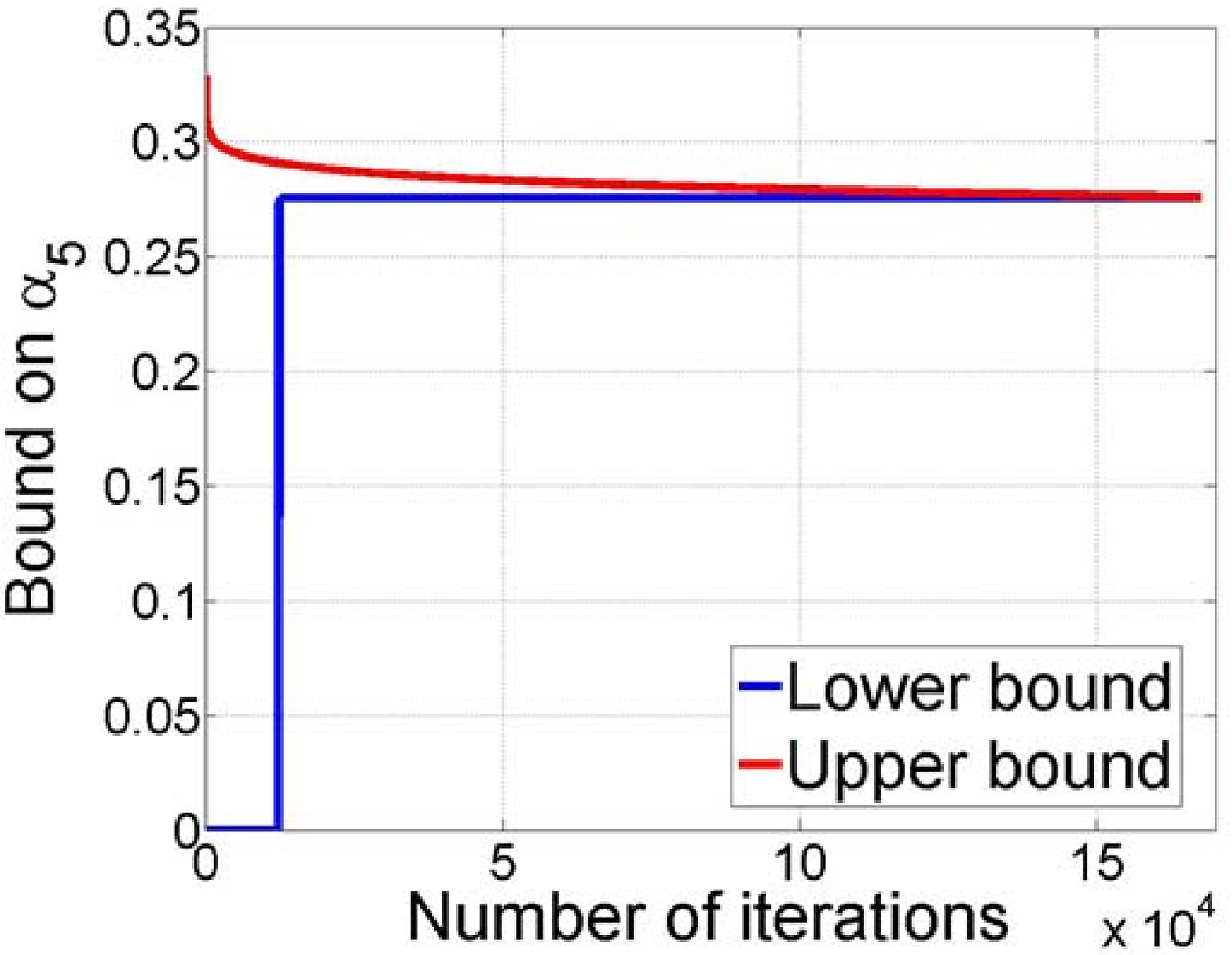} }
    \subfloat[$160 \times 200$]{ \includegraphics[scale=0.22]{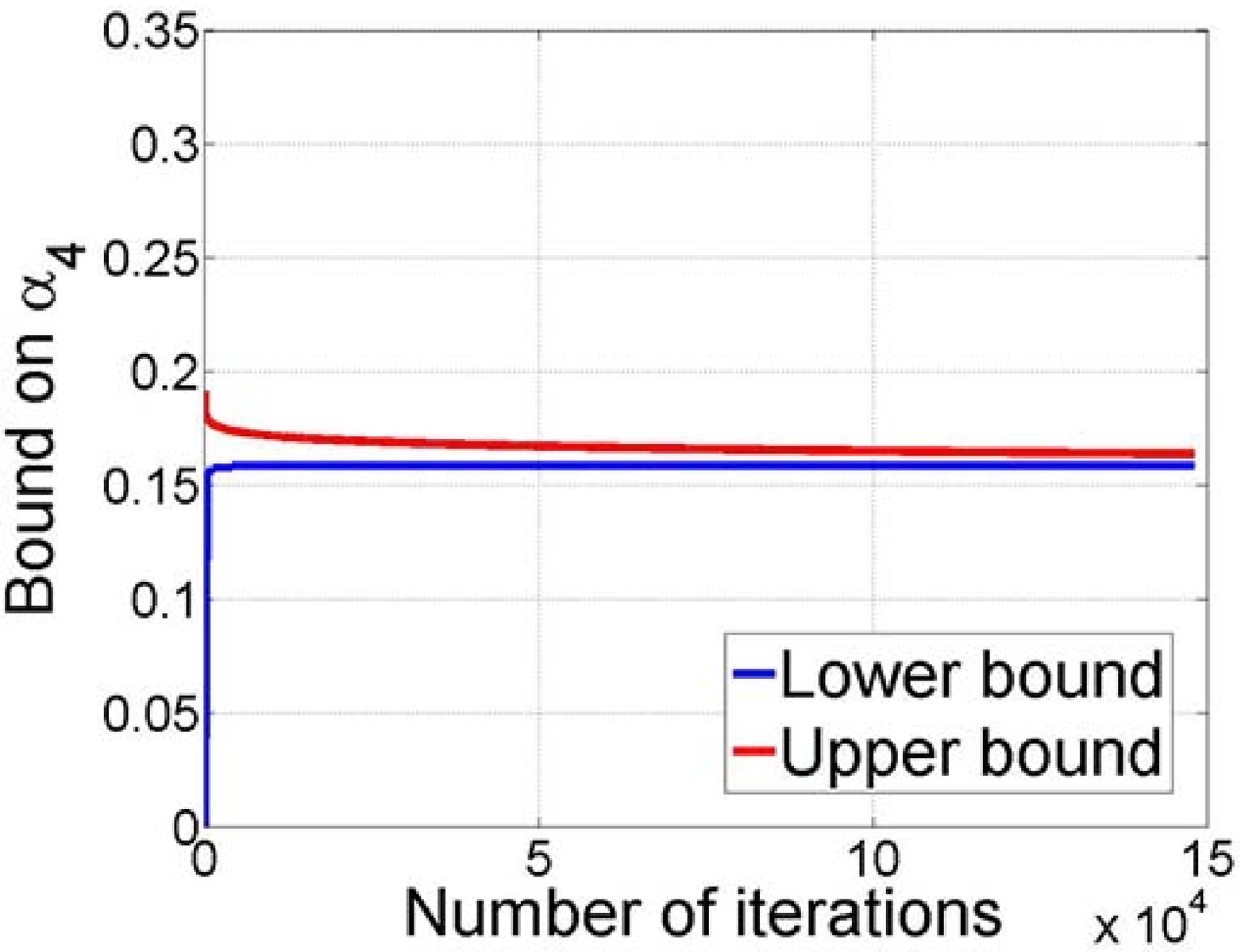} }
    \caption{\small{Global Lower Bound (GLB) and Global Upper Bound (GUB) in TSA on Gaussian sensing matrices. (a) For $(k, l) = (5, 3)$, we obtained (GLB,GUB)=(0.27,0.28) after 167501 iterations. (b) For $(k,l)=(4,2)$, we obtained (GLB, GUB)=(0.15,0.17) after 148101 iterations.}}
    \label{fig:TSA_GLB_GUB}
\end{figure}

\subsection{Application to network tomography problem}
We apply our new tools introduced in this paper to verify NSC for sensing matrices in network tomography problems \cite{xu2011Compressive,firooz2010network,coates2001network,tsang2003network,vardi1996network,castro2004network}. In an undirected graph model for the communication network, the communication delay over each link can be determined by sending packets through probing paths that are composed of connected links. The delay of each path is then measured by adding the delays over its links. Generally most links are uncongested, and only a few congested links have significant delays. It is, therefore, reasonable to think of finding the link delays as a sparse recovery problem. This sparse problem can be expressed in a system of linear equations $y=Ax$, where the vector $y \in \mathbb{R}^{m}$ is the delay of $m$ paths, the vector $x\in \mathbb{R}^{n}$ is the delay vector for the $n$ links, and $A$ is a sensing matrix. The element $A_{ij}$ of $A$ is $1$, if and only if path $y_i,\;i\in \{1,\;2,\;...,\;m\}$, goes through link $j$, $j \in \{1,\;2,\;...,\;n\}$; otherwise $A_{ij}$ equal to 0 (see Fig. \ref{network_model}). The indices of nonzero elements in the vector $x$ correspond to the congested links.
\begin{figure}[t!]
    \centering
    \includegraphics[scale=0.3]{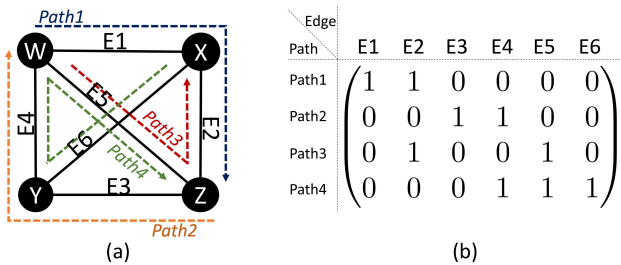}
    \caption{\small{(a) A simple example of a network tomography graph. $W$, $X$, $Y$, and $Z$ are nodes in the network, and \textit{Path$1$, $2$, $3$}, and \textit{$4$} are the probing paths through which the packets are sent. (b) The sensing matrix corresponding to the graph shown in (a). The rows and columns of the matrix represent probing paths and edges respectively.}}
    \label{network_model}
\end{figure}

In our numerical experiments to verify NSC in network tomography problems, the paths for sending data packets were generated by random walks of fixed length. Table \ref{network_example} summarizes the results of our experiments. We note that by using TSA, one can \emph{exactly} verify that a total of $k=2$ and $k=4$ congested link delays can be uniquely found by solving $\ell_1$ minimization problem (\ref{L1_min}) for the randomly generated network measurement matrices $33 \times 66$ (12-node complete graph) and $53 \times 105$ (15-node complete graph) respectively. For ESM, we estimated the execution time by multiplying the unit time to solve  (\ref{opt_prob2}) and the total number of cases in the exhaustive search. We obtained the unit time to solve (\ref{opt_prob2}) by calculating the arithmetic mean from 100 trials. For a $53 \times 105$ matrix, 3-Step TSA substantially reduced the execution time to find $\alpha_5$ around 137 times compared to ESM.

\begin{table}[t!]
\caption{\label{network_example} $\alpha_k$ and execution time in network tomography problems }
\centering
\subfloat[$\alpha_k$ values]{
\begin{threeparttable}
\setlength{\tabcolsep}{2pt}
{\small
    \begin{tabular*}{0.45\textwidth}{@{\extracolsep{\fill}}cccccccc@{}}
        \hline
        matrix $A$                  & Algo. & $\alpha_1$  & $\alpha_2$    & $\alpha_3$   & $\alpha_4$   & $\alpha_5$ & $k_{max}$  \\
        \hline
        \multirow{3}{*}{33 $\times$ 66}    & 1-Step TSA  & 0.28  & 0.41     & 0.50    & 0.57    & 0.62        & 2    \\
        						    & 2-Step TSA  & 0.28  & 0.41     & 0.50    & 0.57    & 0.62-0.64        & 2    \\
        						    & 3-Step TSA  & 0.28  & 0.41     & 0.50    & 0.57    & 0.62        & 2    \\
        						    \hline
        \multirow{3}{*}{53 $\times$ 105}   & 1-Step TSA  & 0.20  & 0.29     & 0.36    & 0.45    & 0.52-0.54        & 4    \\
        									  & 2-Step TSA  & 0.20  & 0.29     & 0.36    & 0.45    & 0.49-0.56        & 4    \\
        									  & 3-Step TSA  & 0.20  & 0.29     & 0.36    & 0.45    & 0.52        & 4    \\[-1pt]
        \hline
    \end{tabular*}
}
\begin{tablenotes}
    \item [a]{\scriptsize Random walk step: 20\quad}
\end{tablenotes}
\end{threeparttable}
}
\\
\subfloat[Execution time (Unit: second)]{
\begin{threeparttable}
\setlength{\tabcolsep}{2pt}
{\small
    \begin{tabular*}{0.45\textwidth}{@{\extracolsep{\fill}}ccccccc@{}}
        \hline
        matrix $A$         & Algo.          & $\alpha_1$ & $\alpha_2$    & $\alpha_3$   & $\alpha_4$   & $\alpha_5$  \\
        \hline
        \multirow{4}{*}{ \begin{tabular}[c]{@{}l@{}} 33 $\times$ 66    \end{tabular} }
             & 1-Step TSA   & 0.74  & 3.62  &  28.94  & 404.11   &  5.94e4 \\
             & 2-Step TSA   & 0.74  & 3.62  &  43.94  & 541.70   &  1 day\\ 
             & 3-Step TSA   & 0.74  & 3.62  &  1.69e3  & 1.73e3   &  3.70e4 \\
			& ESM  & 0.64 & 3.94  & 1.63e3  & 1.4e4\tnote{a}   & 1.8e5\tnote{a}           \\[-1pt]
        \hline
        \multirow{4}{*}{ \begin{tabular}[c]{@{}l@{}} 53 $\times$ 105  \end{tabular} }
             & 1-Step TSA   & 1.31  & 30.61 & 608.90 &  5.35e3 & 1 day        \\
             & 2-Step TSA   & 1.31  & 116.12 	&  143.99  & 1.05e3  & 1 day        \\
             & 3-Step TSA   & 1.31  & 116.12 	&  7.95e3 	 &  7.93e3	 &  1.38e4       \\
             & ESM          & 1.28 & 127.28 & 8.70.e3  & 9.6e4\tnote{a} & 1.9e6\tnote{a}        \\[-1pt]
        \hline
    \end{tabular*}
}
\begin{tablenotes}
    \item [a]{\scriptsize Exhaustive search method (Estimated execution time = average time to solve (\ref{opt_prob2}) (=0.02 second) for an index set $\times$ total number of index sets)}
\end{tablenotes}
\end{threeparttable}
}
\end{table}

We further carried out numerical experiments on even larger network model having 300 nodes and 400 edges. We created a random spanning tree for a network model by using random walk approach \cite{wilson1996generating}. At each probing path, we randomly chose a node among 300 nodes as a starting point of random-walk and walked 100 times along the network connection. We obtained a $320 \times 400$ matrix corresponding to the network model.  We calculated $\alpha_k$ values via $l$-Step TSA, where $l=1,2$. In terms of the execution time, in Table \ref{network_IEEE}, we compared TSA with ESM, where the unit time to solve (\ref{opt_prob2}) was obtained by calculating the arithmetic mean from 100 trials.  Especially, 1-Step TSA reduced the execution time to find $\alpha_4$ by around $28700$ times compared to ESM. 
\begin{table}[t!]
\caption{\label{network_IEEE}  $\alpha_k$ and execution time in a large network model having 300 nodes and 400
edges}
\centering
\subfloat[$\alpha_k$ values ]{
\begin{threeparttable}
\setlength{\tabcolsep}{2pt}
{\small
    \begin{tabular*}{0.45\textwidth}{@{\extracolsep{\fill}}ccccccc@{}}
        \hline
       Algo. & $\alpha_1$  & $\alpha_2$    & $\alpha_3$   & $\alpha_4$   & $\alpha_5$ &  $\alpha_6$\\
        \hline
        1-Step TSA  & 0.07  & 0.13     &  0.15   &  0.18   &  0.20    &    0.22-0.26\tnote{a} \\
        2-Step TSA  & 0.07  & 0.13     &  0.15   &  0.18   &  0.20-0.23\tnote{a}  &   0.22-0.28\tnote{a}  \\
        \hline
    \end{tabular*}
}
\end{threeparttable}
}
\\
\subfloat[Execution time (Unit: second)]{
\begin{threeparttable}
\setlength{\tabcolsep}{2pt}
    \begin{tabular*}{0.45\textwidth}{@{\extracolsep{\fill}}ccccccc@{}}
        \hline
        Algo.          & $\alpha_1$ & $\alpha_2$    & $\alpha_3$   & $\alpha_4$   & $\alpha_5$  & $\alpha_6$ \\
        \hline
             1-Step TSA   & 63.37  & 65.70     & 599.96   & 5.49e3  & 8.60e4  & 1 day\\
             2-Step TSA   & 63.37  &  3.46e4 & 3.54e4   &  4.03e4  & 1 day & 1 day\\ 
		    ESM & 73.22  & 3.20e4  & 1.59e6\tnote{b}   &  1.58e8\tnote{b}   &  1.25e10\tnote{b}  & 8.22e11\tnote{b}        \\[-1pt]
        \hline
    \end{tabular*}
\begin{tablenotes}
 	\item [a]{\scriptsize Lower bound - upper bound} \\
    \item [b]{\scriptsize Exhaustive search method (Estimated Operation time = average time to solve (\ref{opt_prob2}) (=0.15 second) for an index set $\times$ total number of index sets)}

\end{tablenotes}
\end{threeparttable}
}
\end{table}


\subsection{Discussion}
In this section, we discuss the strengths and weaknesses of our proposed algorithms, compared with earlier research \cite{d2011testing,juditsky2011verifiable}.
\begin{enumerate}
\item \textbf{Comparisons with LP and SDP.}  Our proposed pick-$1$-element algorithm can achieve similar performance as the LP \cite{juditsky2011verifiable} and SDP methods \cite{d2011testing}. However, our pick-$1$-element algorithm has the clear advantage of being more computationally efficient for large dimensional sensing matrices. Please see Table \ref{tbl:Gaussian_HD2}, where the LP and SDP methods cannot provide the performance bounds on recoverable sparsity $k$ due to high computational complexity. On the other hand, in Table \ref{tbl:Gaussian_HD2}, our pick-$1$-element algorithm can efficiently provide bounds on recoverable sparsity $k$. The LP method has high computational complexity because it has to deal with a large convex program of design dimension $mn$, which leads to prohibitive computational complexity when $m$ and $n$ are large \cite{juditsky2011verifiable}.

    In our pick-$1$-element algorithm, we proposed the novel idea of sorting $\alpha_{{1, \{i\}}}$'s (see Lemma \ref{lemma_pickl_upper_bound}), which leads to improved performance bounds on $\alpha_k$ and recoverable sparsity $k$. This sorting idea, combined with Lemma \ref{lemma_pickl_upper_bound}, provides us with larger recoverable sparsity bound $k$, than purely using $\alpha_1$ for bounding recoverable $k$ in \cite[Section 4.2.B]{juditsky2011verifiable}.

\item \textbf{Set-specific upper bounds.} Our proposed pick-$l$-element algorithm ($l\geq 2$) is novel, and can provide improved bounds on $\alpha_k$ and recoverable sparsity $k$, using polynomial computational complexity in $n$ when $l$ is fixed. This approach is not practical when $l$ is large. However, pick-2-element and pick-$3$-element algorithm can already provide improved performance bounds, compared with the previous research \cite{juditsky2011verifiable,d2011testing}.

    The fact that we can obtain upper bounds on $\alpha_{k}$, based on the results of pick-$l$-element ($l\geq 2$) algorithm, is new and non-trivial (see Lemma \ref{lemma_pickl_upper_bound}, Lemma \ref{kmax_bound} and Lemma \ref{pickl_optimized_coefficients_upper_bound}). For example, if we know $\alpha_5\leq 0.22$, we can use Lemma \ref{kmax_bound} to obtain that $\alpha_{11}\leq 0.22\times 11/5<0.5$.

    Our pick-$l$-element algorithm can provide set-specific upper bound for $\alpha_{k,K}$, laying the foundation for our branch-and-bound TSA.

\item \textbf{Computational complexity of TSA.} We proposed TSA to find precise values for $\alpha_{k}$ with significantly reduced average-case computational complexity than ESM.  The computational complexity of TSA is dependent on $n$, sparsity $k$, and a chosen constant $l$. When $k$, $n$ and $l$ are large enough, finding $\alpha_k$ via TSA is still computationally expensive. In the worst case, TSA has the same computational complexity as ESM.  However, our extensive simulations ranging from Fig. \ref{hist:SWA_TSA} to Fig. \ref{fig:k_n_curve_k_increase} and from Table \ref{network_example} to Table \ref{tbl:Fourier} show that on average, TSA can greatly reduce the computational complexity of finding $\alpha_k$ compared with ESM.

    Moreover, since TSA maintains an upper bound and a lower bound of $\alpha_k$ during its iterations, one can always early terminate TSA, and still get improved performance bounds on $\alpha_k$ than the LP and SDP methods. We can use TSA to find an exact value of $\alpha_l$, where $l < k$, and then use Lemma \ref{kmax_bound} to bound $\alpha_{k}$.

\item \textbf{Use of data structures.} We used Object-Oriented Programming (OOP) to implement TSA in Matlab \cite{matlabOOP}, because the OOP makes it easy to handle tree-type structures. In OOP, we defined a class and created objects from the class to store property of each node $J$, e.g., $B(J)$, in the tree. In order to make a connection between two tree nodes, we used doubly linked list data structure as a part of the object. However, in case readers would like to implement the algorithm using alternative data structures, we have provided implementation-agnostic pseudocode of our algorithm in Algorithm \ref{TSA_algo}.

\item \textbf{Difference from phase transition works.} There has been extensive research on the phase transitions of various sparse recovery algorithms such as Basis Pursuit (BP), Orthogonal Matching Pursuit (OMP), and Approximate Message Passing (AMP) \cite{eldar2012compressed}. However, our research is different from the research on phase transition in two aspects. Firstly, our work and the previous works \cite{juditsky2011verifiable,d2011testing} are focusing on worst-case performance guarantee (recovering all the possible $k$-sparse signals), while the research on phase transition is considering the average-case performance guarantee for a single $k$-sparse signal with fixed support and sign pattern. Secondly, the phase transition bounds are mostly for random matrices. Hence, for a given deterministic sensing matrix,  phase transition results cannot be used for that particular matrix.

\end{enumerate}

\section{Conclusion}
In this paper, we consider the problem of verifying the null space condition in compressed sensing. Calculating the proportional parameter $\alpha_k$ that characterizes the null space condition of a sensing matrix is a non-convex optimization problem, and also known to be NP-hard in \cite{Tillmann2014Computational}. In order to verify the null space condition, we proposed novel and simple enumeration-based algorithms, which are called the basic and optimized pick-$l$ algorithms, to obtain upper bounds of $\alpha_k$. With these algorithms, we further designed a new algorithm called the tree search algorithm to gain a global solution to the non-convex optimization problem of verifying the null space condition. Numerical experiments show that our algorithms outperform the previously proposed algorithms \cite{d2011testing,juditsky2011verifiable} in performance as well as speed.

\section*{Declarations}
\footnotesize
\section*{Availability of data and material}
All the codes used for the numerical experiments are available at the following link: \\
\url{https://sites.google.com/view/myungcho/software/nsc}.
\section*{Acknowledgements}
We thank Alexandre d'Aspremont from CNRS at Ecole Normale Superieure, Anatoli Juditsky from Laboratoire Jean Kuntzmann
at Universite Grenoble Alpes, and Arkadi Nemirovski from Georgia Institute of Technology for helpful discussions and providing codes for the simulations in \cite{d2011testing} and \cite{juditsky2011verifiable}.

\section*{Funding}
The work of Weiyu Xu is supported by Simons Foundation 318608, KAUST OCRF-2014-CRG-3, NSF DMS-1418737 and NIH 1R01EB020665-01.

\section*{Competing interests}
The authors declare that they have no competing interests.

\section*{Authors' contributions}
Myung Cho and Weiyu Xu designed the algorithms. Myung Cho implemented the algorithms. Kumar Vijay Mishra checked the implementation of the algorithms and helped to polish the manuscript. All authors read and approved the final manuscript.


\normalsize
\appendix
\section{Proof of Proposition \ref{thm:LPvsPick1}}
\begin{proof}
Let us denote the sum of $k$ maximal magnitudes of elements of $x \in \mathbb{R}^n$ as 
$$ ||x||_{k,1}= \underset{|K|\leq k}{\text{maximize}} \sum_{i \in K} |x_i|.$$
We use $i_1$, $i_2$, ..., and $i_k$ (or $j_1$, $j_2$, ..., and $j_k$ ) to denote $k$ distinct integers between 1 and $n$. For a matrix, say $A$, we use $A_{i,j}$ to represent its element in the $i$-th row and $j$-th column.
\setlength{\mathindent}{0pt}
\par\noindent
\small
\begin{align}
    \alpha_{k}^{LP}
    &= \underset{Y=[y_1,...,y_n]\in \mathbb{R}^{m\times n}}{ \text{minimize}}\;\bigg\{ \underset{1\leq j \leq n}{ \text{maximize}}\; ||(I-Y^T A)e_j||_{k,1}\; \bigg\} \nonumber\\
   & = \underset{Y=[y_1,...,y_n]\in \mathbb{R}^{m\times n}}{ \text{minimize}}\;\bigg\{ \underset{ i_1, i_2, ..., i_k,  j}{ \text{maximize}}\; \sum_{t=1}^{k}|(I-Y^T A)_{i_{t},j}|\; \bigg\} \nonumber\\
   &\leq  \underset{Y=[y_1,...,y_n]\in \mathbb{R}^{m\times n}}{ \text{minimize}}\;\bigg\{ \underset{\substack{ {i_1, i_2, ..., i_k,}\\{ j_1, j_2, ...,j_k}}}{ \text{maximize}}\; \sum_{t=1}^{k}|(I-Y^T A)_{i_{t},j_{t}}|\; \bigg\} \nonumber\\
   &= \underset{Y=[y_1,...,y_n]\in \mathbb{R}^{m\times n}}{ \text{minimize}}\;\bigg\{ \underset{ i_1, i_2, ..., i_k}{ \text{maximize}}\; \sum_{t=1}^{k} ||e_{i_t} - A^T y_{i_t}||_{\infty} \; \bigg\}  \nonumber\\
   &= \underset{ i_1, i_2, ..., i_k}{ \text{maximize}}\;  \bigg\{\sum_{t=1}^{k}   (\underset{y_{i_t}\in \mathbb{R}^{m\times 1}}{ \text{minimize}}\;   ||e_{i_t} - A^T y_{i_t}||_{\infty})\;\bigg\},
   \label{eq:boundproof}
\end{align}
\normalsize
where we can exchange the order of ``maximize'' and ``minimize'' in the last equality because $||e_{i_t} - A^T y_{i_t}||_{\infty}$ only depends on $y_{i_t}$.

Moreover, according to the equations for ``$\alpha^i$''  between (4.29) and (4.30) in \cite{juditsky2011verifiable} (taking $\beta$ there to be $\infty$),
\setlength{\mathindent}{15pt}
\par\noindent
\small
\begin{align*}
   &\underset{y_{i_t}\in \mathbb{R}^{m\times 1}}{ \text{minimize}}\;   ||e_{i_t} - A^T y_{i_t}||_{\infty}\;\\
   &= \underset{x}{ \text{maximize}} \big\{ e_{i_t}^Tx\; : \; Ax=0, \; ||x||_1 \leq 1 \big\} \\
   &=\alpha_{1,\{i_t\}}.
\end{align*}
\normalsize


Combining this with (\ref{eq:boundproof}), $\alpha_{k}^{LP}$ is no bigger than the upper bound calculated by Lemma \ref{lemma_pickl_upper_bound} (based on the pick-$1$-element algorithm). Namely,
\begin{align}
\alpha_{k}^{LP} \leq \alpha_k^{pick1}.
\end{align}
\qed
\end{proof}

\section{Sensing matrices with $n=40$}
\label{sec:app:lowSim}
Here, we provide the numerical results for small sensing matrices with $n=40$ to compare our methods to LP \cite{juditsky2011verifiable} and SDP \cite{d2011testing} methods.

\begin{table}[t]
\centering
\caption{\label{tbl:Gaussian} $\alpha_k$ comparison and execution time - Gaussian Matrix}
\subfloat[$\alpha_k$ comparison]{
\begin{threeparttable}
\setlength{\tabcolsep}{5.5pt}
\scriptsize{
    \begin{tabular}{|c|c|c|c|c|c|c|c|}
        \hline
        $\underset{(m \times n)}{A}$  & Algo.  & $\alpha_1$    & $\alpha_2$    & $\alpha_3$   & $\alpha_4$   & $\alpha_5$ & $k_{max}$\tnote{d} \\
        \hline
        \multirow{9}{*}{  $20\times40$  }
          & pick-$1$        & 0.28 & 0.55 & 0.81 & 1 	      & 1 		& 1/1.1 \\
          & pick-$2$       & 0.28 &  0.45 & 0.66 & 0.85 & 1 		& 2/1.9 \\
          & pick-$3$       & 0.28 & 0.45 &  0.57 & 0.76 & 0.92	& 2/1.9  \\
          & 1-Step TSA   &  0.28 & 0.45 & 0.57 & 0.67 & 0.75 	& 2/1.9  \\
          & 2-Step TSA   & 0.28 & 0.45 & 0.57 & 0.67 & 0.75 	& 2/1.9  \\
          & 3-Step TSA   & 0.28 & 0.45 & 0.57 & 0.67 & 0.75 	& 2/1.9  \\
          & LP\tnote{a}   & 0.28  & 0.50 & 0.67 & 0.84 & 0.98	& 2/1.6  \\
          & SDP\tnote{b} & 0.28 & 0.49 & 0.66 & 0.81 & 0.95	& 2/1.8  \\
          & ESM\tnote{c} & 0.28 & 0.45 & 0.57 & 0.67 & 0.75 & 2/1.9 \\
        \hline
        \multirow{9}{*}{  24$\times$40  }
          & pick-$1$      & 0.23 & 0.46 & 0.67 & 0.87 & 1 		& 2/2.0   \\
          & pick-$2$      & 0.23 & 0.37 & 0.53 & 0.69 & 0.85 	& 2/2.1   \\
          & pick-$3$      & 0.23 & 0.37 & 0.46 & 0.61 & 0.75 	& 3/2.8   \\
          & 1-Step TSA   & 0.23 & 0.37 & 0.46 & 0.57 & 0.65 	& 3/2.8  \\
          & 2-Step TSA   & 0.23 & 0.37 & 0.46 & 0.57 & 0.65 	& 3/2.8   \\
          & 3-Step TSA   & 0.23 & 0.37 & 0.46 & 0.57 & 0.65 	& 3/2.8   \\
          & LP            		& 0.23 & 0.41 & 0.56 & 0.71 & 0.84	& 2/2.0   \\
          & SDP           	& 0.23 & 0.41 & 0.55 & 0.70 & 0.82 	& 2/2.0	    \\
          & ESM           	& 0.23 & 0.37 & 0.46 & 0.57 & 0.65 	& 3/2.8   \\
        \hline
        \multirow{9}{*}{  28$\times$40  }
          & pick-$1$      & 0.18 & 0.36 & 0.53 & 0.70 & 0.86 &  2/2.0   \\
          & pick-$2$      & 0.18 & 0.31 & 0.46 & 0.59 & 0.72 &   3/3.0  \\
          & pick-$3$      & 0.18 & 0.31 & 0.41 & 0.54 & 0.66 &   3/3.0  \\
          & 1-Step TSA   & 0.18 & 0.31 & 0.41 & 0.49 & 0.57 &   4/3.5  \\
          & 2-Step TSA   & 0.18 & 0.31 & 0.41 & 0.49 & 0.57 &   4/3.5  \\
          & 3-Step TSA   & 0.18 & 0.31 & 0.41 & 0.49 & 0.57 &   4/3.5  \\
          & LP            		& 0.18 & 0.34 & 0.49 & 0.61 & 0.72 &   3/3.0  \\
          & SDP           	& 0.18 & 0.34 & 0.48 & 0.60 & 0.71 &   3/3.0  \\
          & ESM           	& 0.18 & 0.31 & 0.41 &  0.49 & 0.57 &   4/3.5  \\
        \hline
        \multirow{9}{*}{  32$\times$40  }
          & pick-$1$      & 0.14 & 0.29 & 0.42 & 0.55 & 0.67 &   3/3.0  \\
          & pick-$2$      & 0.14 & 0.24 & 0.37 & 0.47 & 0.58 &   4/3.8  \\
          & pick-$3$      & 0.14 & 0.24 & 0.33 & 0.44 & 0.53 &   4/4.2  \\
          & 1-Step TSA   & 0.14 & 0.24 & 0.33 & 0.40 & 0.47 &   5/4.9  \\
          & 2-Step TSA   & 0.14 & 0.24 & 0.33 & 0.40 & 0.47 &   5/4.9  \\
          & 3-Step TSA   & 0.14 & 0.24 & 0.33 & 0.40 & 0.47 &   5/4.9  \\
          & LP            		& 0.14 & 0.27 & 0.38 & 0.49 & 0.58 &   4/3.9  \\
          & SDP           	& 0.14 & 0.27 & 0.38 & 0.48 & 0.57 &   4/4.0  \\
          & ESM           	& 0.14 & 0.24 & 0.33 & 0.40 & 0.47 &   5/4.9   \\
        \hline
    \end{tabular}
}
\end{threeparttable}
}\\
\subfloat[Execution time]{
\begin{threeparttable}
\setlength{\tabcolsep}{5.3pt}
{\scriptsize
    \begin{tabular}{|c|c|c|c|c|c|c|}
        \multicolumn{7}{r} {\scriptsize (Unit: second) }\\
        \hline
        $\underset{(m \times n)}{A}$  & Algo.  & $\alpha_1$    & $\alpha_2$    & $\alpha_3$   & $\alpha_4$   & $\alpha_5$   \\
        \hline
        \multirow{9}{*}{  20$\times$40 }
          & pick-$1$      & 0.35  & 0.35    	& 0.35 		&  0.35 		& 0.35   \\
          & pick-$2$      & 0.35  & 10.96  	& 10.96  	& 10.96  	& 10.95   \\
          & pick-$3$      & 0.35  & 10.96  	& 313.65  	& 313.65   & 313.65    \\
          & 1-Step TSA   & 0.50  & 2.14    	& 11.78  	& 128.98  	& 1.62e3   \\
          & 2-Step TSA   & 0.50  & 13.20  	& 14.11  	& 58.93  	& 3.77e3   \\
          & 3-Step TSA   & 0.50  & 13.20  	& 320.20  & 346.43  & 695.53   \\          
          & LP            		& 0.55  & 0.55    	& 0.58  	& 0.55 		& 0.56   \\
          & SDP           	& 56.92 & 6.02e3  & 5.14e3  & 5.12e3  	& 5.61e3   \\
          & ESM           	& 0.35  & 10.96  	&  313.65  & 4.5e3 	& 6.0e4 \\
        \hline
        \multirow{9}{*}{  24$\times$40 }
          & pick-$1$      & 0.44  & 0.44  		& 0.44  	& 0.44  	& 0.44   \\
          & pick-$2$      & 0.44  & 13.00  	& 13.00  	& 13.00  	& 13.00   \\
          & pick-$3$      & 0.44  & 13.00  	& 311.27  	& 311.27  	& 311.27   \\
          & 1-Step TSA   & 0.50  & 2.05  		& 9.63  	& 77.45  	& 429.48   \\
          & 2-Step TSA   & 0.50  & 12.92  	& 13.60  	& 35.08  	& 634.62   \\
          & 3-Step TSA   & 0.50  & 12.92  	& 319.27  	& 378.10   & 481.29  \\
          & LP            		 & 0.84  & 0.94  	& 0.88  	& 0.83 		& 0.82  \\
          & SDP           	& 62.18  & 5.59e3  & 4.89e3 & 4.75e3  & 5.37e3   \\
          & ESM           	& 0.44  & 13.00  	& 311.27 	& 4.6e3 	& 6.4e4  \\
        \hline
        \multirow{9}{*}{  28$\times$40  }
          & pick-$1$      & 0.58  & 0.58  		& 0.58  	& 0.58  	& 0.58     \\
          & pick-$2$      & 0.58  & 14.67  	& 14.67  	& 14.67  	&  14.67  \\
          & pick-$3$      & 0.58  & 14.67  	& 326.80  & 326.80  & 326.80   \\
          & 1-Stpe TSA   & 0.52  & 1.41  		& 4.39  	& 32.43  	& 119.86   \\
          & 2-Stpe TSA   & 0.52  & 13.54  	& 13.82  	& 29.35  	& 126.62   \\
          & 3-Stpe TSA   & 0.52  & 13.54 	& 327.79  & 404.23  & 383.61   \\
          & LP            		& 1.12  	& 1.20  		& 1.12  		& 1.09  		&	0.68   \\
          & SDP           	& 71.27 & 5.55e3  & 4.90e3  & 4.98e3  & 4.72e3   \\
          & ESM           	& 0.58  & 14.67 	& 326.80  & 4.7e3 	& 6.9e4  \\
        \hline
        \multirow{9}{*}{ 32$\times$40 }
          & pick-$1$      	& 0.42  & 0.42  		& 0.42  		& 0.42  		& 0.42   \\
          & pick-$2$      	& 0.42  & 13.29  	& 13.29  	& 13.29  	& 13.29   \\
          & pick-$3$      	& 0.42  & 13.29  	& 331.80  & 331.80  	& 331.80   \\
          & 1-Step TSA   & 0.55  & 1.14  		& 2.89  	& 13.50  	& 40.67   \\
          & 2-Step TSA   & 0.55  & 14.22  	& 14.32  	& 18.13  	& 40.35   \\
          & 3-Step TSA   & 0.55  & 14.22  	& 340.87  & 336.29  & 355.06   \\
          & LP            		& 0.70  	&  0.71 		& 0.72 		& 0.70  		& 0.70   \\
          & SDP           	& 56.12 & 7.17e3  	& 5.43e3  & 5.07e3  	& 4.79e3   \\
          & ESM           	& 0.42  & 13.29  	& 331.80  & 4.9e3 	& 7.1e4   \\
        \hline
    \end{tabular}
}
\end{threeparttable}
}\\[-15pt]
\centering
\begin{threeparttable}
\begin{tabular*}{0.45\textwidth}{c}
\end{tabular*}
\end{threeparttable}
\begin{tablenotes}
    \item [a] {\scriptsize Linear Programming \cite{juditsky2011verifiable} }
    \item [b] {\scriptsize Semidefinite Programming \cite{d2011testing} }
    \item [c] {\scriptsize Exhaustive Search Method  }
    \item [d] {\scriptsize median / arithmetic mean }
\end{tablenotes}
\end{table}

\begin{table}[t]
\centering
\caption{\label{tbl:Fourier} $\alpha_k$ comparison and execution time - Partial Fourier Matrix}
\subfloat[$\alpha_k$ comparison]{
\begin{threeparttable}
{\scriptsize
    \begin{tabular}{|c|c|c|c|c|c|c|c|}
        \hline
        $\underset{(m \times n)}{A}$  & Algo.  & $\alpha_1$    & $\alpha_2$    & $\alpha_3$   & $\alpha_4$   & $\alpha_5$ & $k_{max}$\tnote{d}  \\
        \hline
        \multirow{9}{*}{  20$\times$40   }
          & pick-$1$      & 0.19 & 0.39 & 0.59 & 0.78 & 0.98 & 2/2.0   \\
          & pick-$2$      & 0.19 & 0.36 & 0.55 & 0.73 & 0.91 & 2/2.2   \\
          & pick-$3$      & 0.19 & 0.36 & 0.47 & 0.64 & 0.80 & 3/2.7   \\
          & 1-Step TSA    & 0.19 & 0.36 & 0.47 & 0.61 & 0.70 & 3/2.7   \\
          & 2-Step TSA    & 0.19 & 0.36 & 0.47 & 0.61 & 0.70 & 3/2.7   \\
          & 3-Step TSA    & 0.19 & 0.36 & 0.47 & 0.61 & 0.70 & 3/2.7   \\
          & LP\tnote{a}   & 0.19 & 0.39 & 0.59 & 0.78 & 0.98 & 2/2.0   \\
          & SDP\tnote{b}  & 0.19 & 0.39 & 0.59 & 0.78 & 0.98 & 2/2.0   \\
          & ESM\tnote{c}  & 0.19 & 0.36 & 0.47 & 0.61 & 0.70 & 3/2.7   \\
        \hline
        \multirow{9}{*}{  24$\times$40  }
          & pick-$1$      & 0.15 & 0.31 & 0.47 & 0.62 & 0.78 & 3/2.8   \\
          & pick-$2$      & 0.15 & 0.27 & 0.42 & 0.55 & 0.69 & 3/3.0   \\
          & pick-$3$      & 0.15 & 0.27 & 0.38 & 0.51 & 0.64 & 3/3.4   \\
          & 1-Step TSA    & 0.15 & 0.27 & 0.38 & 0.49 & 0.59 & 4/3.5   \\
          & 2-Step TSA    & 0.15 & 0.27 & 0.38 & 0.49 & 0.59 & 4/3.5   \\
          & 3-Step TSA    & 0.15 & 0.27 & 0.38 & 0.49 & 0.59 & 4/3.5   \\
          & LP            & 0.15 & 0.31 & 0.47 & 0.62 & 0.78 & 3/2.8   \\
          & SDP           & 0.15 & 0.31 & 0.47 & 0.62 & 0.78 & 3/2.8   \\
          & ESM           & 0.15 & 0.27 & 0.38 & 0.49 & 0.59 & 4/3.5   \\
        \hline
        \multirow{9}{*}{ 28$\times$40 }
          & pick-$1$      & 0.12 & 0.25 & 0.37 & 0.50 & 0.62 & 4/3.6  \\
          & pick-$2$      & 0.12 & 0.23 & 0.35 & 0.47 & 0.58 & 4/4.0  \\
          & pick-$3$      & 0.12 & 0.23 & 0.32 & 0.44 & 0.54 & 4/4.0  \\
          & 1-Step TSA    & 0.12 & 0.23 & 0.32 & 0.41 & 0.50 & 4/4.0  \\
          & 2-Step TSA    & 0.12 & 0.23 & 0.32 & 0.41 & 0.50 & 4/4.1  \\
          & 3-Step TSA    & 0.12 & 0.23 & 0.32 & 0.41 & 0.50 & 4/4.1  \\
          & LP            & 0.12 & 0.25 & 0.37 & 0.50 & 0.62 & 4/3.6  \\
          & SDP           & 0.12 & 0.25 & 0.37 & 0.50 & 0.62 & 4/3.6  \\
          & ESM           & 0.12 & 0.23 & 0.32 & 0.41 & 0.50 & 4/4.1  \\
        \hline
        \multirow{9}{*}{  32$\times$40  }
          & pick-$1$      & 0.09 & 0.19 & 0.29 & 0.38 & 0.48 & 5/4.7  \\
          & pick-$2$      & 0.09 & 0.17 & 0.27 & 0.36 & 0.44 & 5/4.7  \\
          & pick-$3$      & 0.09 & 0.17 & 0.25 & 0.35 & 0.43 & 5/4.7  \\
          & 1-Step TSA    & 0.09 & 0.17 & 0.25 & 0.33 & 0.39 & 5/4.7  \\
          & 2-Step TSA    & 0.09 & 0.17 & 0.25 & 0.33 & 0.39 & 5/4.7  \\
          & 3-Step TSA    & 0.09 & 0.17 & 0.25 & 0.33 & 0.39 & 5/4.7  \\
          & LP            & 0.09 & 0.19 & 0.29 & 0.38 & 0.48 & 5/4.7  \\
          & SDP           & 0.09 & 0.19 & 0.29 & 0.38 & 0.48 & 5/4.7  \\
          & ESM           & 0.09 & 0.17 & 0.25 & 0.37 & 0.39 & 5/4.7  \\
        \hline
    \end{tabular}
    }
\end{threeparttable}
}
\\
\subfloat[Execution time]{
\begin{threeparttable}
{\scriptsize
    \begin{tabular}{|c|c|c|c|c|c|c|}
        \multicolumn{7}{r} {\scriptsize (Unit: second) }\\
        \hline
        $\underset{(m \times n)}{A}$  & Algo.  & $\alpha_1$    & $\alpha_2$    & $\alpha_3$   & $\alpha_4$   & $\alpha_5$    \\
        \hline
        \multirow{9}{*}{ 20$\times$40  }
         & pick-$1$      & 0.31 & 0.31 & 0.31    & 0.31   & 0.31    \\
         & pick-$2$      & 0.31 & 10.85 & 10.85  & 10.85  & 10.85    \\
         & pick-$3$      & 0.31 & 10.85 & 260.41 & 260.41 & 260.41    \\
         & 1-Step TSA    & 0.47 & 9.72 & 70.57  & 329.28 & 3.60e3    \\
         & 2-Step TSA    & 0.47 & 11.97 & 18.54  & 45.18  & 3.36e3     \\
         & 3-Step TSA    & 0.47 & 11.97 & 291.29 & 297.45 & 633.12    \\
         & LP            & 0.49 & 0.77  & 0.53   & 0.59   & 0.51    \\
         & SDP           & 33.93 & 2.34e3 & 2.65e3 & 2.91e3 & 2.60e3    \\
         & ESM           & 0.31 & 10.85 & 260.41 & 4.1e3  & 6.0e4  \\
        \hline
        \multirow{9}{*}{  24$\times$40  }
         & pick-$1$      & 0.39 & 0.39  & 0.39   & 0.39  & 0.39   \\
         & pick-$2$      & 0.39 & 11.51 & 11.51  & 11.51 & 11.51    \\
         & pick-$3$      & 0.39 & 11.51 & 302.86 & 302.86 & 302.86    \\
         & 1-Step TSA    & 0.48 & 12.12 & 76.21 & 407.67 & 2.77e3    \\
         & 2-Step TSA    & 0.48 & 12.52 & 21.46  & 107.00 & 1.83e3    \\
         & 3-Step TSA    & 0.48 & 12.52 & 306.43 & 426.17 & 1.36e3    \\
         & LP            & 0.62 & 0.56  & 0.66   & 0.59   & 0.58    \\
         & SDP           & 41.13 & 2.39e3 & 2.66e3 & 2.63e3 & 2.56e3    \\
         & ESM           & 0.39 & 11.51 & 302.86 & 4.5e3 & 6.4e4  \\
        \hline
        \multirow{9}{*}{  28$\times$40 }
         & pick-$1$      & 0.43 & 0.43  & 0.43   & 0.43   & 0.43   \\
         & pick-$2$      & 0.43 & 13.29 & 13.29  & 13.29  & 13.29    \\
         & pick-$3$      & 0.43 & 13.29 & 341.05 & 341.05 & 341.05    \\
         & 1-Step TSA    & 0.50 & 8.70 & 31.53  & 272.68 & 731.90    \\
         & 2-Step TSA    & 0.50 & 12.99 & 16.85  & 47.45  & 544.79    \\
         & 3-Step TSA    & 0.50 & 12.99 & 317.40 & 410.47 & 553.67    \\
         & LP            & 0.65 & 0.67  & 0.71   & 0.67   & 0.75    \\
         & SDP           & 40.51 & 2.17e3 & 2.29e3 & 2.80e3 & 2.63e3    \\
         & ESM           & 0.43 & 13.29 & 341.05 & 4.7e3  & 6.5e4  \\
        \hline
        \multirow{9}{*}{  32$\times$40  }
         & pick-$1$      & 0.57 & 0.57  & 0.57    & 0.57   & 0.57   \\
         & pick-$2$      & 0.57 & 17.24 & 17.24   & 17.24  & 17.24    \\
         & pick-$3$      & 0.57 & 17.24 & 385.26  & 385.26 & 385.26    \\
         & 1-Step TSA    & 0.52 & 6.39 & 22.35   & 101.67 & 451.62    \\
         & 2-Step TSA    & 0.52 & 13.38 & 18.65   & 49.46  & 372.35    \\
         & 3-Step TSA    & 0.52 & 13.38 & 326.40  & 476.55 & 1.02e3    \\
         & LP            & 0.86 & 0.89  & 0.78 	 & 0.75   & 0.76    \\
         & SDP           & 46.51 & 2.41e3 & 2.62e3 & 2.53e3 & 2.75e3    \\
         & ESM           & 0.57 & 17.24 & 385.26  & 4.8e3  & 6.8e4     \\
        \hline
    \end{tabular}
    }
\end{threeparttable}
}\\[-15pt]
\centering
\begin{threeparttable}
\begin{tabular*}{0.45\textwidth}{c}
\end{tabular*}
\end{threeparttable}
\begin{tablenotes}
    \item [a] {\scriptsize Linear Programming \cite{juditsky2011verifiable} }
    \item [b] {\scriptsize Semidefinite Programming \cite{d2011testing} }
    \item [c] {\scriptsize Exhaustive Search Method  }
    \item [d] {\scriptsize median / arithmetic mean}
\end{tablenotes}
\end{table}

\end{document}